\documentclass[journal,10pt,twocolumn]{IEEEtran}
\usepackage{graphicx}
\usepackage{caption,color}
\usepackage{subcaption}
\usepackage{amsthm,amssymb}
\usepackage{amsmath}
\usepackage{algorithm}
\usepackage{algorithmic}
\usepackage{epstopdf}
\usepackage{mathtools,amssymb}

\newcommand{\rev}[1]{{\color{black}#1}}

\newtheorem{myDef}{Definition}
\newtheorem{myThe}{Theorem}
\newtheorem{myPro}{Proposition}
\linespread{0.95}
%\linespread{1.4}

\begin{document}

\title{\huge Study of Opportunistic Relaying and Jamming Based on Secrecy-Rate Maximization for Buffer-Aided Relay Systems}

\author{Xiaotao~Lu
        and~Rodrigo~C. de~Lamare,~\IEEEmembership{Senior~Member,~IEEE \vspace{-1em}} % <-this % stops a space
\thanks{Xiaotao Lu is with the Communications Research Group, Department of Electronics, University of York, YO10 5DD York, U.K. e-mail: xtl503@york.ac.uk}% <-this % stops a space
\thanks{R. C. de Lamare is with CETUC, PUC-Rio, Brazil and with the
Communications Research Group, Department of Electronics, University
of York, YO10 5DD York, U.K. email: rcdl500@york.ac.uk.}% <-this % stops a space
}

\maketitle

\begin{abstract}

In this paper, we investigate opportunistic relaying and jamming
techniques and develop relay selection algorithms that maximize the
secrecy rate for multiuser buffer-aided relay networks. We develop
an approach to maximize the secrecy rate of relay systems that does
not require the channel state information (CSI) of the
eavesdroppers. We also devise relaying and jamming function
selection (RJFS) algorithms to select multiple relay nodes as well
as multiple jamming nodes to assist the transmission. In the
proposed RJFS algorithms inter-relay interference cancellation (IC)
is taken into account. IC is first performed to improve the
transmission rate to legitimate users and then inter-relay IC is
applied to amplify the jamming signal to the eavesdroppers and
enhance the secrecy rate. With the buffer-aided relays the jamming
signal can be stored at the relay nodes and a buffer-aided RJFS
(BF-RJFS) algorithm is proposed. Greedy RJFS and BF-RJFS algorithms
are then developed for relay selection with reduced complexity.
Simulation results show that the proposed RJFS and BF-RJFS
algorithms can achieve a higher secrecy rate performance than
previously reported techniques even in the absence of CSI of the
eavesdroppers.

\end{abstract}

% Note that keywords are not normally used for peerreview papers.
\begin{IEEEkeywords}
Physical-layer security, relay systems, resource allocation,
jamming.
\end{IEEEkeywords}

% For peer review papers, you can put extra information on the cover
% page as needed:
% \ifCLASSOPTIONpeerreview
% \begin{center} \bfseries EDICS Category: 3-BBND \end{center}
% \fi
%
% For peerreview papers, this IEEEtran command inserts a page break and
% creates the second title. It will be ignored for other modes.
\IEEEpeerreviewmaketitle

\section{Introduction}

The broadcast nature of wireless communications makes secure
transmissions a very challenging problem. Security techniques
implemented at the network layer rely on encryption keys which are
nearly unbreakable. However, the computational cost of such
encryption algorithms is extremely high. In order to reduce such
cost novel security techniques at the physical layer have been
developed. Physical-layer security was first conceived by Shannon in
his landmark 1949 paper \cite{Shannon} using an information
theoretic viewpoint, where the feasibility of physical-layer
security has been theoretically discussed. Later on, Wyner proposed
a wire-tap channel model that can achieve positive secrecy rates
under the assumption that users have statistically better channels
than those of the eavesdroppers\cite{Wyner}. Since then further
research has been devoted to the wire-tap model in broadcast and
multiple-antenna channels \cite{Csiszar,Oggier,Tie}. Techniques to
enhance the secrecy of wireless systems such as artificial noise
\cite{Goel}, beamforming \cite{Junwei} and relay techniques
\cite{Oohama,wang2017} have also been extensively studied.

\subsection{Previous Work and Problems}

Recently, the concept of physical-layer security with multiuser
wireless networks has been thoroughly investigated and approaches
based on transmit processing and relay techniques have drawn a great
deal of attention \cite{wang2017,Mukherjee,zhao2019,li2019}.
Transmit processing relies on intelligent design of precoding and
signalling strategies to improve the secrecy rate performance. The
use of relays \cite{Dong} and the exploitation of spatial diversity
can also enhance secrecy rates. Moreover, recent advances like
buffer-aided relays have gained significant attention
\cite{Zlatanov,Gaojie,Huang} as they can provide significant
performance advantages over standard relays.

Buffer-aided relay systems with secure constraints have been
investigated in half-duplex \cite{Zlatanov,Gaojie,Shafie1,Huang} and
full-duplex systems \cite{Shafie2}. Opportunistic relay schemes have
been examined with buffer-aided systems in
\cite{Nomikos1,Nomikos2,Lee}. In this context, inter-relay
interference cancellation (IC) at relay nodes is a fundamental
aspect in opportunistic relay schemes. In \cite{Nomikos1}, IC has
been combined with buffer-aided relays and power adjustment to
mitigate inter-relay interference (IRI) and minimize the energy
expenditure. Furthermore, in \cite{Nomikos2} a distributed joint
relay-pair selection has been proposed with the aim of rate
maximization in each time slot using a threshold to avoid increased
relay-pair switching and CSI acquisition. In \cite{Lee} and
\cite{Jingchao}, a jammer selection algorithm and a joint relay and
jammer selection technique have been investigated. The studies in
\cite{Lee} and \cite{Jingchao} have shown that relaying contributes
to a better transmission rate for legitimate users, whereas jamming
can deteriorate the transmission to the eavesdropper. Therefore,
relaying and jamming lead to an improvement in secrecy rate
performance. However, opportunistic buffer-aided relay schemes with
jamming techniques for improving physical layer security have not
been examined so far.

\subsection{Contributions}

In this work, we propose an opportunistic relaying and jamming
scheme and develop relay selection algorithms for the downlink of
multiuser single-antenna and multiple-input multiple-output (MIMO)
buffer-aided relay networks that maximize the secrecy rate, which is
a challenging task due to the difficulty to obtain CSI of the
eavesdroppers. Preliminary results of the proposed techniques have
been reported in \cite{xiaotao2015}, where relaying and jamming
selection have been examined, and in \cite{xiaotao2016}, where relay
selection based on the secrecy rate has been studied. Here, we
devise a relay selection approach for effective secrecy rate (E-SR)
maximization that does not require CSI of the eavesdroppers. The
proposed relaying and jamming function selection (RJFS) algorithms
select multiple relay nodes as well as multiple jamming nodes to
help the transmission. We also present an opportunistic relaying and
jamming scheme in which relaying or jamming is performed within the
same set of relays at different time slots. In the proposed RJFS
algorithms, IC is employed to improve the transmission rate to
legitimate users and the residual interference is used to amplify
the jamming signal to the eavesdroppers. We exploit buffer-aided
relays to store the jamming signals at the relay nodes and devise a
buffer-aided relaying and jamming function selection (BF-RJFS)
algorithm. Greedy RJFS and BF-RJFS algorithms are also developed for
relay selection with reduced complexity. Simulations show that the
proposed RJFS and BF-RJFS algorithms can outperform previously
reported techniques in the absence of CSI of the eavesdroppers. In
addition, the greedy RJFS and BF-RJFS algorithms achieve a
performance close to that of the exhaustive search-based RJFS and
BF-RJFS algorithms, while requiring a much lower computational cost.
The main contributions of this work are:
\begin{itemize}

  \item The E-SR maximization approach that does not require CSI of
  the eavesdroppers is proposed.

  \item An opportunistic relaying and jamming scheme for
  single-antenna and MIMO buffer-aided relay systems.

  \item Novel RJFS algorithms that maximize the secrecy rate are
  developed for buffer-aided relay systems.

  \item Greedy RJFS and BF-RJFS algorithms are developed to reduce the
  computational complexity of exhaustive search-based RJFS and BF-RJFS algorithms.

  \item A secrecy rate analysis of the proposed RJFS algorithms.

\end{itemize}

This paper is organized as follows. In Section II, the system model
and problem formulation are introduced. A review of relay selection
techniques and a novel relay selection criterion without CSI to the
eavesdroppers are included in Section III. The proposed RJFS and
BF-RJFS algorithms are introduced in Section IV. In Section V a
secrecy analysis is carried out. In Section VI, we present and
discuss the simulation results. The conclusions are given in Section
VII.

\subsection{Notation}

{\small { \hspace{-1em} \linespread{1.1}
\begin{tabular}{|l|l|}
  \hline
  % after \\: \hline or \cline{col1-col2} \cline{col3-col4} ...
  Notation & Description \\
  \hline
  \hline
  ${\boldsymbol A}\in {\mathbb{C}}^{M\times N}$ & denotes matrices of size ${M\times N}$ \\
  ${\boldsymbol a}\in {\mathbb{C}}^{M\times 1}$ & denotes column vectors of length $M$ \\
  $(\cdot)^\ast$, $(\cdot)^T$ and $(\cdot)^H$ & represent conjugate, transpose, \\
  & and conjugate transpose, respectively \\
  $\boldsymbol I_{M}$ & is an identity matrix with size $M$ \\
  $\rm diag \{\boldsymbol a\}$ & is a diagonal matrix with the \\
  & elements of $\boldsymbol a$ along its diagonal\\
  $\mathcal{CN}(0,\sigma_{n}^{2})$ & represents complex Gaussian \\
  & random variables with independent  \\
  & and identically distributed ($i.i.d$)  \\
  & entries with mean $0$ and variance $\sigma_{n}^{2}$ \\
  $\log (\cdot)$ & denotes the base-2 logarithm \\
  $\|\boldsymbol A\|_{\rm F}$ & is the Frobenius norm of ${\boldsymbol
  A}$ \\
  ${\boldsymbol H}_{i}\in {\mathbb{C}}^{N_{i}\times N_{t}}$ & is the
  channel matrix from the \\ & transmitter to the $i$th relay\\
  $\boldsymbol L_{\rm state}\in {\mathbb{C}}^{S_{\rm
  total}N_{i}\times L}$ & is the state matrix of the relays \\
  ${\boldsymbol s}^{(t)}\in {\mathbb{C}}^{MN_{i}\times 1}$ & is the
  transmit signal at the source \\
  ${\boldsymbol y}_{i}^{(t)}\in
  {\mathbb{C}}^{N_{i}\times 1}$ and  & refer to the received signals\\
  ${\boldsymbol y}_{r}^{(t)}\in {\mathbb{C}}^{N_{r}\times 1}$& at the
  relays and the destination \\
  $\varGamma_{\rm{IC}-i}^{(t)}$, $\varGamma_{\rm{IC}-e}^{(t)}$ & refer to
  SINR at the $i$th relay node, the\\
  and $\varGamma_{\rm{IC}-r}^{(t)}$ &  $e$th eavesdropper and the
  $r$th receiver\\
  $C_s$ and $R$ & refer to secrecy capacity and rate\\
  $\boldsymbol{\varOmega}^{r}$  &
  refers to the set of $r$ selected relays \\
  $\eta_{\rm Link I}$ and $\eta_{\rm Link II}$ & refer to the selection
  thresholds \\
  $P$ & is the transmit power\\
  \hline
\end{tabular}}}

\section{System Model and Problem Formulation}

In this section, we introduce the multiuser MIMO buffer-aided relay
system model along with details of the proposed opportunistic
relaying and jamming scheme. The physical-layer security problem
associated with the proposed opportunistic relaying and jamming
scheme is then formulated.

\subsection{System Model}
\begin{figure}[h]
\centering
\includegraphics[scale=0.65]{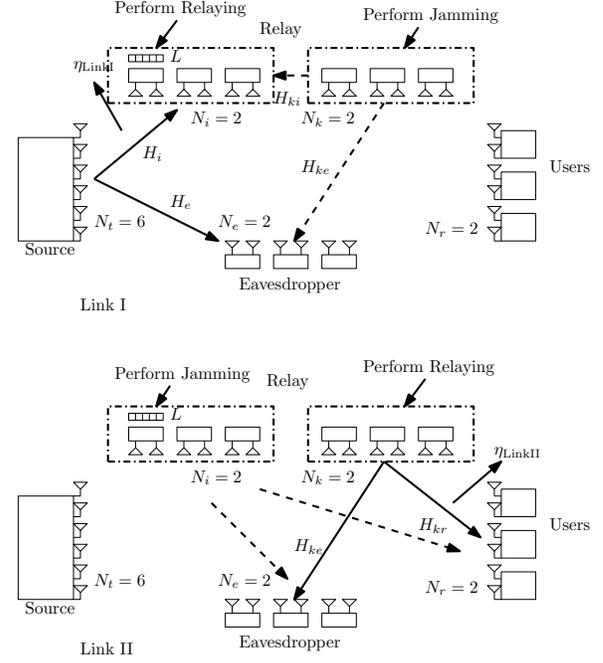}
\vspace{-0.2em}\caption{System model of a multiuser MIMO system with
$M$ users, $N$ eavesdroppers and $S_{\rm total}$ relays.}
\label{fig:sys}
\end{figure}

Fig. \ref{fig:sys} describes the downlink of an opportunistic
multiuser MIMO relay system with $N_{t}$ antennas employed to
transmit data streams aided by precoding to $M$ users in the
presence of $N$ eavesdroppers. The system is equipped with a total
of $S_{\rm total}$ relay nodes and a relay selection scheme that
chooses $S$ out of $S_{\rm total}$ relay nodes. Each relay node is
equipped with $N_i$ antennas and the buffer of each relay can store
$L$ data packets. To show the states of the relays, we use the state
matrix $\boldsymbol L_{\rm state}\in {\mathbb{C}}^{S_{\rm
total}N_{i}\times L}$. Each column of the state matrix $\boldsymbol
L_{\rm state}$ represents the signals stored in the buffers in one
time slot. The buffer state matrix is initialized with zeros.
Similarly to relay systems with two slots, the transmission can be
divided into two parts: Link I and Link II. In Fig. \ref{fig:sys},
the solid lines represent the transmission of intended signals and
the dashed lines denote the transmission of jamming signals. In Link
I, the eavesdroppers try to obtain the signals transmitted from the
source and the jamming signals will cause interference to the
eavesdroppers. Furthermore, in Link II the eavesdroppers attempt to
get the information from the selected relays. In this scenario, the
jamming signals will be generated and transmitted by the relays. In
such opportunistic scheme the relays can be selected to perform
different functions in the same time slot. From Fig. 1 in both Link
I and Link II, although the eavesdropper can accumulate the
information received in both links, the relays selected to perform
jamming will always transmit jamming signals to the eavesdroppers.
Depending on the buffer size, the opportunistic scheme can be
considered in two scenarios:

\begin{itemize}
  \item \textbf{Buffer size $L=1$}: In the first time slot,
  only Link I is employed whereas in the second time slot Link II is used.
  As a result, there is no cooperation between Link I and Link II
  and the temporal advantage is unavailable in this scenario, which means this scheme is
  equivalent to a relay scheme without buffers.

  \item \textbf{Buffer size $L>1$}:
  The thresholds $\eta_{\rm Link I}$ and $\eta_{\rm Link II}$ that indicate the power allocation to the
  transmitter $\eta_{\rm Link I} P$ or $\eta_{\rm Link II}(2-P)$, where $P$ is the power, are calculated separately for Link I and
  Link II, which determines if the relays perform relaying or jamming.
  \begin{itemize}
  \item If $\eta_{\rm Link I}> \eta_{\rm Link II}$, Link I is active. It indicates that the channels from the source to
  the relays can provide a better transmission environment. In this scenario, the jamming signals are generated
  independently at the relays, which are selected to perform the jamming function. The selection of the relays which
  perform the relaying function can be done according to different relay selection criteria.
  The jamming signal will also be stored at the buffers.
   \item If $\eta_{\rm Link I} \leq \eta_{\rm Link II}$, Link II is active. It indicates that the channels
   from the relays to the users have better links. In this scenario, relays will forward
   the signals to the destination. The jamming signals in Link II are the stored jamming signals in Link I, which means
   that the jamming signals in Link II do not need to be generated in Link II.
  \end{itemize}
\end{itemize}

If CSI remains unchanged or one link is always better than the other
than the system will employ a counter, $\eta_{\rm L}$, compare it
with a maximum value $\eta_{L_{\rm max}}$ and activate the link that
has been inactive for $\eta_{L_{\rm max}}$ transmissions. The value
$\eta_{L_{\rm max}}$ is set by the designer. In addition, the change
of links makes it more difficult for the eavesdropper to obtain the
pattern.

In this system, each relay node is equipped with $N_{i}$ antennas.
To indicate when the relays are performing the jamming function, the
relay antenna number is represented by $N_k$. For one relay we can
have $N_k=N_i$. At the receiver side each user and each eavesdropper
is equipped with $N_{r}$ and $N_{e}$ receive antennas. We also
assume that the eavesdroppers do not jam the transmission and the
data transmitted to each user, relay, jammer and eavesdropper
experience a flat-fading MIMO channel. The quantities ${\boldsymbol
H}_{i}\in {\mathbb{C}}^{N_{i}\times N_{t}}$ and ${\boldsymbol
H}_{e}\in {\mathbb{C}}^{N_{e}\times N_{t}}$ denote the channel
matrices of the ith relay and the eth eavesdropper, respectively.
The quantities ${\boldsymbol H}_{ke}\in {\mathbb{C}}^{N_{e}\times
N_{k}}$ and ${\boldsymbol H}_{kr}\in {\mathbb{C}}^{N_{r}\times
N_{k}}$ denote the channel matrices of the kth relay to the eth
eavesdropper and the kth relay to the rth user, respectively. The
channel between the kth relay to the ith relay is represented by
${\boldsymbol H}_{ki}\in {\mathbb{C}}^{N_{i}\times N_{k}}$.

To support the transmission of data to $M$ users, the source is
equipped with $N_{t}\geqslant N_{r}M$ antennas. The total number of
antennas with $S$ relaying function nodes as well as $K$ jamming
function nodes should satisfy $N_{i}S\geqslant N_{r}M$ and
$N_{k}K\geqslant N_{r}M$, respectively. At the same time we assume
that the total number of antennas of the eavesdroppers is $N_{e}N
\geqslant N_{r}M$. In order to satisfy the precoding constraints
\cite{windpassinger}, the number of $N_{r}M$ transmit antennas is
used to transmit signals to $M$ users. The relays can estimate the
channel from the jammers by assuming that there are pilots in the
packet structure, that they know the jamming signals and that the
eavesdroppers cannot decode the jammers. This is reasonable because
the relays also perform jamming and therefore should know the
jamming signals. Moreover, we also assume that CSI of the users can
be obtained at the transmitter by feedback channels from the relays.
Alternatively, advanced parameter estimation and relay techniques
can be employed
\cite{smce,jpais,armo,badstbc,baplnc,jio,jidf,jiols,jiomimo,dce,smaxlink,mwc,chd}.
The vector ${\boldsymbol s}_{i}^{(t)} \in {\mathbb{C}}^{N_{i}\times
1}$ contains the data symbols of each user to be transmitted in time
slot $t$. The transmit signal at the transmitter can be expressed
as:
\begin{equation}
{\boldsymbol s}^{(t)}={\left[ {{\boldsymbol s}_{1}^{(t)}}^{T} \quad
{{\boldsymbol s}_{2}^{(t)}}^{T} \quad  \cdots \quad {{\boldsymbol
s}_M^{(t)}}^{T}\right]}^{T} ~\in {\mathbb{C}}^{MN_{i}\times 1}.
\end{equation}
In previous works \cite{Xiaotao1,Xiaotao_eurasip,Keke2,Keke3,wlbd},
precoding techniques have been applied to mitigate the interference
among users. In this work, we adopt for simplicity linear
zero-forcing precoding whose precoding matrix can be described by
\begin{equation}
\boldsymbol U^{(t)}={\boldsymbol H^{(t)}}^{H}({\boldsymbol
H^{(t)}}{\boldsymbol H^{(t)}}^{H})^{-1}.
\end{equation}
with ${\boldsymbol U}_{i}\in {\mathbb{C}}^{N_{t}\times N_{i}}$,
the total precoding matrix can be expressed as
\begin{equation}
{\boldsymbol U}^{(t)}=\left[ {\boldsymbol U}_{1}^{(t)} \quad
{\boldsymbol U}_{2}^{(t)} \quad \cdots \quad {\boldsymbol
U}_{M}^{(t)}\right],
\end{equation}
and the channel matrix to $S$ selected relays is given by
\begin{equation}
{\boldsymbol H}^{(t)}={\left[ {{\boldsymbol H}_{1}^{(t)}}^{T} \quad
{{\boldsymbol H}_{2}^{(t)}}^{T}  \quad \cdots \quad {{\boldsymbol
H}_{S}^{(t)}}^{T}\right]}^{T}\in {\mathbb{C}}^{SN_{i}\times N_{t}}.
\end{equation}
If the number of antennas equipped at each relay and each user are
the same, the minimum required number of relays is
${\color{black}S=M}$. The channels of the selected relays forwarding
the signals to the $r$th user are described by
\begin{equation}
{\boldsymbol H_{Kr}}^{(t)}={\left[ {{\boldsymbol H}_{1r}^{(t)}}
\quad {{\boldsymbol H}_{2r}^{(t)}} \quad \cdots \quad {{\boldsymbol
H}_{Kr}^{(t)}}\right]}\in {\mathbb{C}}^{N_{r}\times KN_{k}}
\end{equation}
and the channels from the relays to the users are described by
\begin{equation}
{\boldsymbol H}_M^{(t)}={\left[ {{\boldsymbol H}_{K1}^{(t)}}^{T}
\quad {{\boldsymbol H}_{K2}^{(t)}}^{T} \quad \cdots \quad
{{\boldsymbol H}_{KM}^{(t)}}^{T}\right]}^{T}\in
{\mathbb{C}}^{MN_{r}\times KN_{k}}.
\end{equation}
The selected relays also perform jamming for Link I's transmission
to the eavesdroppers, whereas the channels of the jammers to the
$i$th relay are given by
\begin{equation}
{\boldsymbol H_{Ki}}^{(t)}={\left[ {{\boldsymbol H}_{1i}^{(t)}}
\quad {{\boldsymbol H}_{2i}^{(t)}} \quad \cdots \quad {{\boldsymbol
H}_{Ki}^{(t)}}\right]}\in {\mathbb{C}}^{N_{i}\times KN_{k}}.
\end{equation}
In each link, if we assume that the total jamming signals are
${\boldsymbol J}=[{{\boldsymbol j}_{1}}^{T} \quad {{\boldsymbol
j}_{2}}^{T}\quad \cdots \quad {{\boldsymbol j}_{K}}^{T}]^{T}$, the
received signal ${\boldsymbol y}_{i}^{(t)}\in
{\mathbb{C}}^{N_{i}\times 1}$ at each relay node can be expressed by
\begin{equation}
{\boldsymbol y}_{i}^{(t)} = {\boldsymbol H}_{i}\boldsymbol U_{i}{\boldsymbol
s}_{i}^{(t)}+\sum_{j\neq i}{\boldsymbol H}_{i}\boldsymbol U_{j}{\boldsymbol
s}_{j}^{(t)}+{\boldsymbol H}_{Ki}^{(t)}{\boldsymbol
J}+\boldsymbol{n}_{i} \label{eqn:yit}
\end{equation}
In (\ref{eqn:yit}), $\boldsymbol n_i \in \mathcal{CN}(0,\sigma^2_n)$
and the superscript $pt$ designates the previous time slot when the
signal is stored in the buffer at the relay nodes. The quantity
$\sigma^2_n$ is the noise variance for the channel and ${\boldsymbol
H}_{Ki}{\boldsymbol J}$ is regarded as the IRI among the $i$th relay
and the $K$ jammers. The intended relays are selected according to
different criteria, which will be explained later on. The received
signals are expressed by ${\boldsymbol y}^{(pt)}=[{{\boldsymbol
y}_{1}^{(pt_{1})}}^{T} \quad {{\boldsymbol
y}_{2}^{(pt_{2})}}^{T}\quad \cdots \quad {{\boldsymbol
y}_{S}^{(pt_{S})}}^{T}]^{T}$. The superscript $pt$ represents the
time slot and due to the characteristics of buffer relay nodes, the
values can be different for each relay node. According to the
theorem in \cite{Nomikos1}, IRI can be cancelled. The jammers are
targeted towards the $e$th eavesdropper channel described by
\begin{equation}
{\boldsymbol H_{Ke}}^{(t)}={\left[ {{\boldsymbol H}_{1e}^{(t)}}
\quad {{\boldsymbol H}_{2e}^{(t)}} \quad \cdots \quad {{\boldsymbol
H}_{Ke}^{(t)}}\right]}\in {\mathbb{C}}^{N_{e}\times KN_{k}}
\end{equation}
The received signal at the $e$th eavesdropper is then given by
\begin{equation}
{\boldsymbol y}_{e}^{(t)}={\boldsymbol H}_{e}\boldsymbol U_{i}{\boldsymbol
s}_{i}^{(t)} +\sum_{j\neq i}{\boldsymbol H}_{e}\boldsymbol U_{j}{\boldsymbol
s}_{j}^{(t)}+{\boldsymbol H}_{Ke}^{(t)}{\boldsymbol
J}+\boldsymbol{n}_{e}. \label{eqn:yet}
\end{equation}
where $\boldsymbol n_e \in \mathcal{CN}(0,\sigma^2_n)$ is the noise
vector at the eavesdropper. For the eavesdropper, the term
${\boldsymbol H}_{Ke}^{(t)}{\boldsymbol J}$ acts as the jamming
signal, which cannot be removed without CSI knowledge from the kth
jammer to the eth eavesdropper.

If we assume that the transmitted signals from the relays to the
users are expressed as ${\boldsymbol r}^{(t)}$, the received signal
at the destination is given by
\begin{equation}
{\boldsymbol y}_{r}^{(t)}={\boldsymbol H}_{M} {\color{black}{\boldsymbol r}^{(t)}}+\boldsymbol{n}_{r}. \label{eqn:yrt}
\end{equation}
where $\boldsymbol n_r \in \mathcal{CN}(0,\sigma^2_n)$ is the noise
vector at the users.

In the existing IRI scenario based on (\ref{eqn:yit}) when the
transmit signals $\boldsymbol s$ are statistically independent with
unit average energy $\mathbb{E}[\boldsymbol s \boldsymbol
s^{H}]=\boldsymbol I$, the SINR at relay node i
$\varGamma_{\rm{IRI}-i}^{(t)}$ is expressed by
\begin{equation}
\varGamma_{\rm{IRI}-i}^{(t)}=\dfrac{{\gamma}_{S_i,R_{i}}}{\varphi(k,i)
{\gamma}_{R_{K},R_{i}}+{\gamma}_{S_j,R_{i}}+N_{i}}, \label{eqn:SINRiwoc}
\end{equation}
where $\varphi(K,i)$ is the factor that describes the IC feasibility
and $\boldsymbol{\gamma}_{m,n}$ represents the instantaneous
received signal power for the links $m \longrightarrow n$ as
described by
\begin{equation}
{\gamma}_{S_i,R_{i}}=\|\boldsymbol{H}_{i}\boldsymbol{U_i}\|_{\rm F},
\quad {\gamma}_{S_j,R_{i}}=\sum_{j \neq i}\|\boldsymbol{H}_{i}\boldsymbol{U_j}\|_{\rm F},
\label{eqn:gammasr}
\end{equation}
\begin{equation}
{\gamma}_{R_{K},R_{i}}=\|\boldsymbol{H}_{Ki}{\color{black}{\boldsymbol y}^{(pt)}}\|_{\rm F}.
\label{eqn:gammaki}
\end{equation}
The SINR at the $e$th eavesdropper node $\varGamma_{e}^{(t)}$ as
well as the $r$th legitimate user $\varGamma_{r}^{(t)}$ is described
by
\begin{equation}
\varGamma_{\rm{IRI}-e}^{(t)}=\dfrac{{\gamma}_{S_i,E_{e}}}
{{\gamma}_{R_{K},E_{e}}+{\gamma}_{S_j,E_{e}}+N_{e}}, \label{eqn:SINRewoc}
\end{equation}
and
\begin{equation}
\varGamma_{\rm{IRI}-r}^{(t)}=\dfrac{{\gamma}_{R_{k},R_{r}}}{{\gamma}_{R_{K},R_{r}}+N_{r}},
\label{eqn:SINRrwoc}
\end{equation}
where the terms in (\ref{eqn:SINRewoc}) and (\ref{eqn:SINRrwoc}) are given by
\begin{equation}
{\gamma}_{S_i,E_{e}}=\|\boldsymbol{H}_{e}\boldsymbol{U_i}\|_{\rm F},
\quad {\gamma}_{S_j,E_{e}}=\sum_{j \neq i}\|\boldsymbol{H}_{e}\boldsymbol{U_j}\|_{\rm F},
\label{eqn:gammase}
\end{equation}
\begin{equation}
{\gamma}_{R_{K},E_{e}}=\|\boldsymbol{H}_{Ke}{\boldsymbol J}\|_{\rm F}
\label{eqn:gammare}
\end{equation}
and
\begin{equation}
{\gamma}_{R_{k},R_{r}}=\|\boldsymbol{H}_{kr}
\boldsymbol{J}_{k}\|_{\rm F}.
\label{eqn:gammakr1}
\end{equation}
\begin{equation}
{\gamma}_{R_{K},R_{r}}=\|\boldsymbol{H}_{Kr}
\boldsymbol{J}-\boldsymbol{H}_{kr}
\boldsymbol{J}_{k}\|_{\rm F}.
\label{eqn:gammakr}
\end{equation}
Depending on the IRI cancelation (IC) at the relay nodes, two type
of schemes can be applied. According to \cite{Nomikos1}, if we
assume $\mathbb{E}[\boldsymbol s \boldsymbol s^{H}]=\boldsymbol I$
and $\mathbb{E}[{\color{black}{\boldsymbol y}^{(pt)} {{\boldsymbol
y}^{(pt)}}^{H}}]=\boldsymbol I$, the feasibility of IC can be
described by a factor $\varphi(K,i)$ which is described by
\begin{equation}
\varphi(K,i)=
\begin{cases}
0& \text{if $\det \Big(({\boldsymbol{H}_{i} \boldsymbol{H}_{i}^{H}
+\boldsymbol{I}})^{-1}{\boldsymbol{H}_{Ki}\boldsymbol{H}_{Ki}^{H}}\Big)\geqslant \gamma_{0}$}\\
1 &\text{otherwise},
\end{cases}
\label{eqn:varphi}
\end{equation}
where $\varphi(K,i)=0$ means the interference can be cancelled from
the received signal at the relays, whereas $\varphi(K,i)=1$ means IC
should not be performed. The quantity $\gamma_{0}$ is the threshold
that indicates the feasibility of IC, which is obtained by
simulation. We assume that the channels from the relays, which
perform jamming, are available at the transmitter. %This can be done
%by wireless feedback channels from the relays to the transmitter.
%However, for the eavesdropper the CSI cannot be obtained from the
%relays which perform the jamming function.

In the IC scenario, interference mitigation can be performed at the
relay nodes by setting $\varphi(K,i)=0$. The SINR expressions at the
$i$th relay node, the $e$th eavesdropper and the $r$th receiver are
respectively given by
\begin{equation}
\varGamma_{\rm{IC}-i}^{(t)}=\dfrac{{\gamma}_{S_i,R_{i}}}{{\gamma}_{S_j,R_{i}}+N_{i}},\quad
\varGamma_{\rm{IC}-e}^{(t)}=\dfrac{{\gamma}_{S_i,E_{e}}}{{\gamma}_{R_{K},E_{e}}+{\gamma}_{S_j,E_{e}}+N_{e}}
\label{eqn:SINRiwc}
\end{equation}
and
\begin{equation}
\varGamma_{\rm{IC}-r}^{(t)}=\dfrac{{\gamma}_{R_{k},R_{r}}}{{\gamma}_{R_{K},R_{r}}+N_{r}}.
\label{eqn:SINRrwc}
\end{equation}
Alternative interference mitigation techniques can also be
considered
\cite{mmimo,lsmimo,spa,mfsic,mbdf,did,rrser,bfidd,1bitidd,aaidd,aaidd,listmtc}
\cite{jidf,jidf_echo,sjidf,rccm,rrdoa,okspme,rrsgp,rrser,l1stap,rdrcb,ccg,locsme,rrstap,l1stap2,jiomimo,smtvb}
\cite{jiocdma,aifir,intadap,inttvt,jio,jiols,spa,mbdf,dfjio,jioccm,rcb,rhomo,dcdrec,ccmmwf,wlmwf,wlbeam,barc,saabf,damdc,kaesprit,dce,saalt}
\cite{siprec,gbd,wlbd,mbthp,rmbthp,bbprec,baplnc,memd,locsme,okspme,lrcc}

\subsection{Problem formulation}

In this subsection, we describe the secrecy rate used in the
literature to assess the performance of the proposed algorithms in
physical-layer security systems and formulate the problem. The MIMO
system secrecy capacity \cite{Oggier} is expressed by
\begin{equation}
\begin{split}
C_{s} &=\max_{{\boldsymbol Q}_{s}\geq 0, \rm Tr({\boldsymbol Q}_{s})= E_{s}}\log(\det({\boldsymbol I}+{\boldsymbol H}_{ba} {\boldsymbol Q}_{s} {\boldsymbol H}_{ba}^H))\\
& \quad -\log(\det({\boldsymbol I}+ {\boldsymbol H}_{ea}{\boldsymbol Q}_{s} {\boldsymbol H}_{ea}^H)),
\end{split}
\label{eqn:Rs1}
\end{equation}
where ${\boldsymbol Q}_{s}$ is the covariance matrix associated with
the signal and ${\boldsymbol H}_{ba}$ and ${\boldsymbol H}_{ea}$
represent the links between the source to the users and the
eavesdroppers, respectively. For relay systems \cite{Lee}, according
to (\ref{eqn:yit}) and (\ref{eqn:yrt}), with equal power $P$
allocated to the transmitter and the relays, the achievable rate of
the users is given by
\begin{equation}
R_{r}= \log(\det({\boldsymbol I}+\boldsymbol{\Gamma}_{r}^{(t)}  ) )
\label{eqn:RR1}
\end{equation}
where $\boldsymbol{\Gamma}_{r}^{(t)}$ according to
(\ref{eqn:SINRrwoc}) is given by
\begin{equation}
\boldsymbol{\Gamma}_{r}^{(t)} = \frac{P}{\sum_{k=1}^{K}{N_k}}{\boldsymbol
H}_{Kr} {\boldsymbol H}_{Kr}^H({\boldsymbol I}+\frac{P}{N_t}{\boldsymbol
H}^{(pt)}{\boldsymbol U^{pt}}{\boldsymbol U^{pt}}^{H} {{\boldsymbol H}^{(pt)}}^H ). \label{eqn:RR2}
\end{equation}
Similarly, the achievable rate of eavesdroppers is described by
\begin{equation}
R_{e}=\log(\det({\boldsymbol I}+\boldsymbol{\Gamma}_{e}^{(t)}  ) )
\label{eqn:RE1}
\end{equation}
and the $\boldsymbol{\Gamma}_{e}^{(t)}$ according to (\ref{eqn:SINRewoc}) is described by
\begin{equation}
\boldsymbol{\Gamma}_{e}^{(t)} = ({\boldsymbol I
+\boldsymbol\varDelta})^{-1}{\frac{P}{N_{t}}{\boldsymbol
H}_{e}\boldsymbol U \boldsymbol U^{H}{{\boldsymbol H}_{e}}^H}, \label{eqn:RE2}
\end{equation}
where
\begin{equation}
\boldsymbol\varDelta = {
\sum_{e=1}^{N}\frac{P}{\sum_{k=1}^{K}N_{k}}{\boldsymbol H}_{Ke} {\boldsymbol
H}_{Ke}^H({\boldsymbol I}+\frac{P}{N_{t}}{\boldsymbol
H}^{(pt)}{\boldsymbol U^{pt}}{\boldsymbol U^{pt}}^{H} {{\boldsymbol H}^{(pt)}}^H)}. \label{eqn:RE21}
\end{equation}
In (\ref{eqn:RE2}), $\boldsymbol\varDelta$ is the jamming signal to
the eavesdropper. Using (\ref{eqn:RR1}) and (\ref{eqn:RE1}) the
secrecy rate is given by
\begin{equation}
R=\sum_{r=1}^{S}\sum_{e=1}^{N}[R_{r}-R_{e}]^{+} \label{eqn:R}
\end{equation}
where $[x]^{+}=\max(0,x)$. In (\ref{eqn:R}), we assume that each
eavesdropper will listen to the information transmitted to a
particular user. However, the assumption of the availability of
global CSI knowledge is impractical, especially for the
eavesdroppers. For this reason, we consider partial CSI knowledge to
the relays as well as to the users. The problem we are interested in
solving is to select the set of relay nodes to perform relaying or
jamming based on the maximization of the secrecy rate. Therefore,
the proposed optimization problem can be formulated as:
\begin{equation}
\begin{aligned}
& \underset{\boldsymbol{\varOmega}^{r},\boldsymbol{\varOmega}^{m}}{\text{maximize}}
& & R \\
& \text{subject ~to} & &  \boldsymbol{\varOmega}^{r}, \boldsymbol{\varOmega}^{m} \in \boldsymbol{\Psi}  \\
\end{aligned}
\label{eqn:RF}
\end{equation}
where $\boldsymbol{\Psi}$ represents the collection of relay subsets
and $\boldsymbol{\varOmega}^{r}$ and $\boldsymbol{\varOmega}^{m}$
denote the set of selected jamming function nodes and the set of
relaying function nodes, respectively.

\section{Relay Selection Algorithms}

In conventional relaying or jamming systems, relays always perform
as the transmitter and the receiver to enhance the signal
transmission from the source to the destination \cite{tds}. We first
review several algorithmic solutions under this conventional relay
scenario and then present the proposed relay selection based on the
secrecy rate, which does not require knowledge of CSI to the
eavesdroppers.

\subsection{Conventional Relay Selection}

Conventional relay selection does not take the jamming function of
relay nodes into account and the relay nodes are selected with
different selection criteria to assist the transmission between the
source and the destination with only one eavesdropper \cite{Gaojie}
or without consideration of eavesdroppers \cite{Jingchao,Krikidis}.

In \cite{Krikidis}, a max-min relay selection has been considered as
the optimal selection scheme for conventional decode-and-forward
(DF) relay setups. In a single-antenna scenario the relay selection
is given by
\begin{equation}
R_{i}^{\rm max-min}=\rm{arg} \max_{R_{i} \in \boldsymbol{\Psi}} \min(\|h_{S,R_{i}}\|^{2},\|h_{R_{i},D}\|^{2})
\label{eqn:maxm}
\end{equation}
where $h_{S,R_{k}}$ is the channel gain between the source and the
relay and $h_{R_{k},D}$ is the channel gain between the relay $k$
and the destination. Similarly, a max-link approach has also been
introduced to relax the limitation that the source and the relay
transmission must be fixed. The max-link relay selection strategy
can be described by
\begin{equation}
\begin{split}
R_{i}^{\rm max-link}=&\rm{arg} \max_{R_{i} \in \boldsymbol{\Psi}}\big( \bigcup_{R_{i} \in
\zeta:\varphi{(Q_{p})}\neq L}\|h_{S,R_{i}}\|^{2},\\
&\bigcup_{R_{i} \in
\boldsymbol{\Psi}:\varphi{(Q_{p})}\neq 0}\|h_{R_{i},D}\|^{2}\big) \label{eqn:maxl}
\end{split}
\end{equation}
With the consideration of the eavesdropper, a max-ratio selection policy is
proposed in \cite{Gaojie} and is expressed by
\begin{equation}
R_{i}^{\rm max-ratio}=\rm{arg} \max_{R_{i}  \in \boldsymbol{\Psi}}\left(
\eta_{1},\eta_{2}\right) \label{eqn:maxr}
\end{equation}
with
\begin{equation}
\eta_{1}=\frac{\max_{R_{i} \in \boldsymbol{\Psi}:\varphi{(Q_{p})}\neq
L}\|h_{S,R_{i}}\|^{2}}{\|h_{se}\|^{2}}
\label{eqn:eta1}
\end{equation}
\begin{equation}
\eta_{2}=\max_{R_{i} \in \boldsymbol{\Psi}:\varphi{(Q_{p})}\neq
0}\frac{\|h_{R_{i},D}\|^{2}}{\|h_{R_{ie}}\|^{2}}
\label{eqn:eta2}
\end{equation}
The aforementioned relay selection procedure is based on knowledge
of CSI. %In \cite{Jingchao} a conventional selection as well as the
%optimal selection based on the signal-to-interference-plus-noise
%ratio (SINR) criterion are reported.

\subsection{Optimal Selection (OS)}

Since conventional relay selection \cite{Jingchao} may not support
systems with secrecy constraints, we consider optimal selection (OS)
which takes the eavesdropper into consideration. The SINR of OS in
the downlink of multiuser MIMO relay systems under consideration can
be expressed similarly to (\ref{eqn:SINRewoc}) and
(\ref{eqn:SINRrwoc}), as described by
\begin{equation}
\varGamma_{e}^{(t)}=\dfrac{{\gamma}_{S_i,E_{e}}}{{\gamma}_{S_j,E_{e}}+N_{e}}
\label{eqn:SINRewcos}
\end{equation}
and
\begin{equation}
\varGamma_{r}^{(t)}=\dfrac{{\gamma}_{R_{k},R_{r}}}{{\gamma}_{R_{K},R_{r}}+N_{r}}.
\label{eqn:SINRrwcos}
\end{equation}
The OS algorithm is given by
\begin{equation}
\begin{split}
R^{OS}&=\rm{arg} \max {[R_{r}-R_{e}]^{+}}\\
     &=\rm{arg} \max {[\log(1+ \varGamma_{r}^{(t)} ) )-\log(1+\varGamma_{e}^{(t)} )]^{+}}\\
\end{split}
\label{eqn:ROS}
\end{equation}

{\subsection{Proposed Effective Secrecy-Rate Relay Selection}

In the previously described relay selection algorithms, the
availability of CSI to the eavesdroppers is an adopted assumption in
the design of relay selection algorithms with secrecy constraints.
However, in the optimization problem in (\ref{eqn:RF}), the CSI of
the eavesdroppers is not available to the transmitter and the users.
In order to circumvent this limitation, we propose a novel relay
selection criterion that is termed effective secrecy rate (E-SR),
which does not require CSI to the eavesdroppers and is incorporated
in the multiuser MIMO buffer-aided relay system under study. The
proposed E-SR approach is based on the maximization of the secrecy
rate and introduces a simplification in the computation of the
expression that does not require the knowledge of CSI to the
eavesdroppers. The proposed E-SR approach for selecting multiple
relays is expressed by {\small
\begin{equation}
\begin{split}
\hspace{-0.5em}{\mathcal R}^{\rm S-SR} & = \arg \max_{\boldsymbol
\varphi \in \boldsymbol{\Psi}}\sum_{i \in \boldsymbol \varphi}
\bigg\{\log\big(\det{\left[\boldsymbol I+(\boldsymbol
H_{i}\boldsymbol R_{I}\boldsymbol H_{i}^{H})^{-1}(\boldsymbol
H_{i}\boldsymbol R_{d}\boldsymbol H_{i}^{H})\right]}\big) \\
& \qquad -\log\big(\det{\left[\boldsymbol I + \boldsymbol
U_{i}^{H}\boldsymbol R_{I}^{-1}\boldsymbol U_{i}\boldsymbol
R_{d}\right]}\big)\bigg\}, \label{eqn:alrnew}
\end{split}
\end{equation}}
where the covariance matrix of the interference and the signal can
be described as $\boldsymbol R_{I}=(\boldsymbol
H_i)^{-1}(\boldsymbol H_i^H)^{-1}+\sum_{j\neq i}\boldsymbol
U_{j}{\boldsymbol s}_{j}^{(t)}{{\boldsymbol
s}_{j}^{(t)}}^{H}{\boldsymbol U_{j}}^{H}$ and $\boldsymbol
R_{d}=\boldsymbol U_{i}{\boldsymbol s}_{i}^{(t)}{{\boldsymbol
s}_{i}^{(t)}}^{H}{\boldsymbol U_{i}}^{H}$, respectively. The details
of E-SR relay selection criterion are given in the Appendix. In
(\ref{eqn:alrnew}), no CSI to the eavesdroppers is required and the
E-SR approach only depends on the CSI to the intended receiver and
the covariance matrix of the interference and the signal. In the
following proposed relaying and jamming schemes, the E-SR technique
is applied to circumvent the need for global instantaneous CSI of
the eavesdroppers.

\section{Relaying and Jamming Function Selection Algorithms}

In this section, we detail the proposed RJFS and BF-RJFS algorithms
along with their cost-effective greedy versions for single-antenna
and multiple-antenna scenarios.

\subsection{Relaying and Jamming Function Selection (RJFS)}

We assume that the total number of relay nodes is $S_{\rm total}$
and $\boldsymbol \Omega$ is the total relay set. To apply the
opportunistic scheme in the system, an initial state is set
according to the channel:
\begin{equation}
\boldsymbol \varOmega^{0,*}=\rm{arg} \max_{\boldsymbol \varOmega^m }
\det\Big({\boldsymbol{H}_{\boldsymbol \varOmega^m}
{\boldsymbol{H}_{\boldsymbol \varOmega^m}}^{H}}\Big), \label{eqn:varGammam}
\end{equation}
where ${\boldsymbol{H}}_{\boldsymbol \varOmega^{m}}$ refers to the
set of channels examined prior to selection and we assume that in
the initial state the relays will not perform the jamming function.
$S$ relay nodes are selected according to the criterion as explained
in Section II. With the total number of relaying and jamming nodes
$S_{\rm total}$ and the number of selected nodes in each group $S$,
the selection operation can be expressed as:
\begin{equation}
\boldsymbol \Psi=\binom{{\color{black}S_{\rm total}}}{{\color{black}S}}, \label{eqn:Omega}
\end{equation}
where $\boldsymbol \Psi$ represents the total number of sets of S
combinations and in each set there are $S$ selected relaying or
jamming nodes. For a particular set $\boldsymbol \varOmega^{m}$, the
channel matrix of selected sets can be described by
\begin{equation}
\boldsymbol H_{\boldsymbol \varOmega^{m}} = {\left[ {{\boldsymbol
H}_{\boldsymbol \varOmega_{1}^{m}}^{(t)}}^{T} \quad {{\boldsymbol
H}_{\boldsymbol \varOmega_{2}^{m}}^{(t)}}^{T} \quad \cdots \quad
{{\boldsymbol H}_{\boldsymbol
\varOmega_{\color{black}S}^{m}}^{(t)}}^{T}\right]}^{T}.
\label{eqn:Omega2}
\end{equation}
If the total collection of selected sets is represented by
$\boldsymbol{\Psi}_{\rm Relaying}$, then for each set the relay
selection is given by
\begin{equation}
\begin{split}
\boldsymbol{\varOmega}^{m,*}& = \rm{arg}
\max_{\boldsymbol{\varOmega}^{m} \in \boldsymbol{\Psi}_{\rm
Relaying}}\sum
\bigg\{\log\big(\det{(\boldsymbol{\Gamma}_{\boldsymbol
\varOmega^{m}}^{(t)})}\big)
\\ & \quad -\log\big(\det{(\boldsymbol{\Gamma}_{\boldsymbol
\varOmega^{e}}^{(t)})}\big)\bigg\} \label{eqn:Omega3}
\end{split}
\end{equation}
where
\begin{equation}
\boldsymbol{\Gamma}_{\boldsymbol \varOmega^{m}}^{(t)} =\boldsymbol
I+ (\boldsymbol
H_{i}\boldsymbol R_{I}^{\boldsymbol \varOmega^{m}}\boldsymbol
H_{i}^{H})^{-1}(\boldsymbol
H_{i}\boldsymbol R_{d}^{\boldsymbol
\varOmega^{m}}\boldsymbol H_{i}^{H})
\label{eqn:Omega4}
\end{equation}
and
\begin{equation}
\boldsymbol{\Gamma}_{\boldsymbol \varOmega^{e}}^{(t)}=\boldsymbol
I+\boldsymbol U_{i}^{H} {\boldsymbol R_{I}^{\boldsymbol
\varOmega^{m}}}^{-1}\boldsymbol U_{i}\boldsymbol R_{d}^{\boldsymbol
\varOmega^{m}} \label{eqn:Omega8}
\end{equation}
In (\ref{eqn:Omega4}) and (\ref{eqn:Omega8}), the covariance
matrices $\boldsymbol R_{I}^{\varOmega^{m}}$ and $\boldsymbol
R_{d}^{\varOmega^{m}}$ can be obtained in the same way as
illustrated in (\ref{eqn:alrnew}). The only difference of the RJFS
algorithm resides in the calculation of $\boldsymbol
R_{I}^{\varOmega^{m}}$, apart from the interference from different
users, there is also existing interference from the jamming function
relay nodes. With the same distributions of the channels from the
jamming function relay nodes to the eavesdropper, $\boldsymbol
R_{I}^{\varOmega^{m}}$ can be calculated in a similar way to that in
(\ref{eqn:alrnew}). In Algorithm~\ref{alg:rjfs} the main steps of
RJFS are given. Step 1 of Algorithm ~\ref{alg:rjfs} gives the
collection of relay subsets, which contain the combinations of $S$
relay nodes out of $S_{\rm total}$ relay nodes. In our definition,
$\boldsymbol \Psi_{\rm Relaying}$ is the same as $\boldsymbol \Psi$.
However, to indicate the differences in the buffer relay system, we
use $\boldsymbol \Psi_{\rm Relaying}$ instead of $\boldsymbol \Psi$.
Note that in both RJFS and BF-RJFS algorithms, we use $\boldsymbol
\Psi_{\rm Relaying}$ as a collection of relay subsets which perform
the relaying function. With no buffers implemented at the relay
nodes, the RJFS algorithm only selects the relays in Link I. The
relays used in Link II are the same as those selected in Link I.

\begin{algorithm}
\caption{RJFS Algorithm}\label{alg:rjfs}
\begin{algorithmic}[1]
\REQUIRE $\boldsymbol H_i$, $\boldsymbol R_I$ $\boldsymbol R_d$,
{$S_{\rm total}$} and {$S$}

\STATE $\boldsymbol
\Psi_{\rm Relaying}=\binom{{S_{\rm total}}}{{S}}$

\COMMENT{ Select $S$ relay nodes out of $S_{\rm total}$ nodes, all
combinations are stored in $\boldsymbol \Psi_{\rm Relaying}$} \STATE
$[\varOmega_c \quad \varOmega_r]={\rm size}(\boldsymbol \Psi_{\rm
Relaying})$

\COMMENT{ Give the matrix size of $\boldsymbol \Psi_{\rm Relaying}$}
\FOR {$m=1:\varOmega_c$}
\STATE $\boldsymbol{\Gamma}_{\boldsymbol \varOmega^{m}}^{(t)} =\boldsymbol
I+ (\boldsymbol
H_{i}\boldsymbol R_{I}^{\boldsymbol \varOmega^{m}}\boldsymbol
H_{i}^{H})^{-1}(\boldsymbol
H_{i}\boldsymbol R_{d}^{\boldsymbol
\varOmega^{m}}\boldsymbol H_{i}^{H})$ \STATE
$\boldsymbol{\Gamma}_{\boldsymbol \varOmega^{e}}^{(t)}=\boldsymbol I+\boldsymbol U_{i}^{H}{\boldsymbol R_{I}^{\boldsymbol \varOmega^{m}}}^{-1}\boldsymbol U_{i}\boldsymbol R_{d}^{\boldsymbol \varOmega^{m}}
$ \STATE $\boldsymbol\varGamma(\boldsymbol \varOmega^{m})=\sum
\bigg\{\log\big(\det{(\boldsymbol{\Gamma}_{\boldsymbol \varOmega^{m}}^{(t)})}\big) -\log\big(\det{(\boldsymbol{\Gamma}_{\boldsymbol \varOmega^{e}}^{(t)})}\big)\bigg\}
$

\COMMENT{Calculate the threshold $\boldsymbol\varGamma(\boldsymbol \varOmega^{m})$ for the $\boldsymbol \varOmega^{m}$ combination}
\ENDFOR
\STATE ${\boldsymbol{\varOmega}^{m,*}}=\rm{arg} \max_{\boldsymbol{\varOmega}^{m} \in \boldsymbol{\Psi}_{\rm Relaying}}(\boldsymbol\varGamma(\boldsymbol \varOmega^{m}))$

\COMMENT{Obtain the combination which gives the optimal value}
\RETURN The set of the selected relays ${\boldsymbol{\varOmega}^{m,*}}$

\end{algorithmic}
\end{algorithm}

\subsection{Buffer-Aided Relaying and Jamming Function Selection (BF-RJFS)}

Here we describe the proposed BF-RJFS algorithm, which exploits
relays equipped with buffers. Based on the RJFS algorithm, the
selection of the $S$ relays used for signal reception is the same as
that in the buffer relay scenario. The main difference between the
proposed BF-RJFS and RJFS algorithms relies on the selection of the
jammer. The selection of the set of jamming and communication relays
is performed simultaneously. According to (\ref{eqn:Omega3}), we
assume the corresponding threshold is stored in
$\boldsymbol\varGamma$. Given the total collection of jamming
selections $\boldsymbol{\Psi}_{\rm Jamming}$, the remaining relays
are selected according to the proposed E-SR criterion as described
by
\begin{equation}
\begin{split}
\boldsymbol{\varOmega}^{r,*} & = \rm{arg}
\max_{\boldsymbol{\varOmega}^{r} \in \boldsymbol{\Psi}_{\rm
Jamming}}\sum \bigg\{\log\big(\det{(\boldsymbol{\Gamma}_{\boldsymbol
\varOmega^{r,n}}^{(t)})}\big) \\ & \quad
-\log\big(\det{(\boldsymbol{\Gamma}_{\boldsymbol
\varOmega^{r,e}}^{(t)})}\big)\bigg\}, \label{eqn:sinrct2}
\end{split}
\end{equation}
where $\boldsymbol{\Gamma}_{\boldsymbol \varOmega^{r,n}}^{(t)}$ is given by
\begin{equation}
\boldsymbol{\Gamma}_{\boldsymbol \varOmega^{r,n}}^{(t)}= \boldsymbol I +(\boldsymbol H_{\boldsymbol
\varOmega^{r}}\boldsymbol R_{I}^{BF}\boldsymbol H_{\boldsymbol \varOmega^{r}}^{H})^{-1}(\boldsymbol H_{\boldsymbol
\varOmega^{r}}\boldsymbol R_{d}^{BF}\boldsymbol H_{\boldsymbol \varOmega^{r}}^{H}),
\label{eqn:gamman}
\end{equation}
where $\boldsymbol R_{d}^{BF}$ is the covariance matrix of the
transmit signal from the jamming function relay nodes to the users.
The jamming signal is the same as the received signal from the
relays in previous time slots. The calculation of $\boldsymbol
R_{d}^{BF}$ depends on (\ref{eqn:gammakr}) and $\boldsymbol
R_{I}^{BF}$ relies on (\ref{eqn:gammase}) and (\ref{eqn:gammare}).
In this procedure, the calculation of
$\boldsymbol{\Gamma}_{\boldsymbol \varOmega^{r,n}}^{(t)}$ is
obtained by
\begin{equation}
\boldsymbol{\Gamma}_{\boldsymbol \varOmega^{r,e}}^{(t)}=\boldsymbol
I+ \boldsymbol U_{r}^{H}{\boldsymbol R_{I}^{BF}}^{-1}\boldsymbol
U_{r}\boldsymbol R_{d}^{BF}, \label{eqn:gammae2}
\end{equation}
where the relays used for jamming in the next time slot are
selected. With the selection of communication relays and jamming
relays the system can provide a better secrecy performance as
compared to conventional relay systems. In
Algorithm~\ref{alg:bfrjfs} the main steps of BF-RJFS are outlined.
Steps 1 to 8 of Algorithm~\ref{alg:bfrjfs} eliminate the relay nodes
with empty buffers because they cannot perform relaying function.
Steps 9 to 20 eliminate relay nodes with a full buffer as the
signals from the source cannot be stored in these relay nodes. Steps
21 to 25 return the results of the subset of selected relay nodes.

{\footnotesize
\begin{algorithm}
\caption{BF-RJFS Algorithm}
\label{alg:bfrjfs}
\begin{algorithmic}[1]
{\REQUIRE $\boldsymbol H_i$, $\boldsymbol R_I$,  $\boldsymbol R_d$,
$\boldsymbol H_{\boldsymbol \Omega^r}$, precoding matrix
$\boldsymbol U_r$, $\boldsymbol R_I^{BF}$, $\boldsymbol R_d^{BF}$,
$\boldsymbol L_{\rm state}$, $L$, $r$th set of the selected relays,
{$S_{\rm total}$} and {$S$}}

\IF {$\boldsymbol L_{\rm state}(:,L)= \boldsymbol 0$}
\STATE $\eta_{\rm Link II}=0$

\COMMENT{The buffer is empty}
\ELSIF {$\boldsymbol L_{\rm state}(:,L)\neq \boldsymbol 0$}
{\color{black}\STATE $\boldsymbol{\Gamma}_{\boldsymbol \varOmega^{r,n}}^{(t)}= \boldsymbol I +(\boldsymbol H_{\boldsymbol
\varOmega^{r}}\boldsymbol R_{I}^{BF}\boldsymbol H_{\boldsymbol \varOmega^{r}}^{H})^{-1}(\boldsymbol H_{\boldsymbol
\varOmega^{r}}\boldsymbol R_{d}^{BF}\boldsymbol H_{\boldsymbol \varOmega^{r}}^{H})$
\STATE $\boldsymbol{\Gamma}_{\boldsymbol \varOmega^{r,e}}^{(t)}=\boldsymbol I+
\boldsymbol U_{r}^{H}{\boldsymbol R_{I}^{BF}}^{-1}\boldsymbol U_{r}\boldsymbol R_{d}^{BF}$
\STATE $ \boldsymbol\varGamma_{\rm II}({ \boldsymbol{\varOmega}^{r}})=
\sum
\bigg\{\log\big(\det{(\boldsymbol{\Gamma}_{\boldsymbol \varOmega^{r,n}}^{(t)})}\big) -\log\big(\det{(\boldsymbol{\Gamma}_{\boldsymbol \varOmega^{r,e}}^{(t)})}\big)\bigg\}
$}
\STATE $\eta_{\rm Link II}=\boldsymbol\varGamma_{\rm II}({ \boldsymbol{\varOmega}^{r}})$

\COMMENT {The buffer is not empty, the threshold for Link II $\eta_{\rm Link II}$ is calculated}
\ENDIF
\IF {$\boldsymbol L_{\rm state}(:,1)= \boldsymbol 0$}
\STATE $\boldsymbol \Psi_{\rm Jamming}=\binom{S_{\rm total}}{{\color{black}S}}$
\STATE $[{\varOmega_c}^{2} \quad {\varOmega_r}^{2}]={\rm size}(\boldsymbol \Psi_{\rm Jamming})$
\FOR {$m=1:{\varOmega_c}^{2}$}
{\color{black}
\STATE $\boldsymbol{\Gamma}_{\boldsymbol \varOmega^{m}}^{(t)} =\boldsymbol
I+ (\boldsymbol
H_{i}\boldsymbol R_{I}^{\boldsymbol \varOmega^{m}}\boldsymbol
H_{i}^{H})^{-1}(\boldsymbol
H_{i}\boldsymbol R_{d}^{\boldsymbol
\varOmega^{m}}\boldsymbol H_{i}^{H})$
\STATE
$\boldsymbol{\Gamma}_{\boldsymbol \varOmega^{e}}^{(t)}=\boldsymbol I+\boldsymbol U_{i}^{H}{\boldsymbol R_{I}^{\boldsymbol \varOmega^{m}}}^{-1}\boldsymbol U_{i}\boldsymbol R_{d}^{\boldsymbol \varOmega^{m}}$ \STATE $\boldsymbol\varGamma_{\rm I}(\boldsymbol{\varOmega}^{m})=
\sum
\bigg\{\log\big(\det{(\boldsymbol{\Gamma}_{\boldsymbol \varOmega^{m}}^{(t)})}\big) -\log\big(\det{(\boldsymbol{\Gamma}_{\boldsymbol \varOmega^{e}}^{(t)})}\big)\bigg\}
$}

\COMMENT {The buffer is not full, the threshold for Link I $\eta_{\rm Link I}$ is calculated}
\ENDFOR
\STATE $[\eta_{\rm Link I}, \boldsymbol{\varOmega}^{m,*}]=\rm{arg} \max_{\boldsymbol{\varOmega}^{m} \in
\boldsymbol{\Psi}_{\rm Jamming}} \boldsymbol\varGamma_{\rm I}(\boldsymbol{\varOmega}^{m})$
\ELSIF{$\boldsymbol L_{\rm state}(:,1)\neq \boldsymbol 0$}
\STATE $\eta_{\rm Link I}=0$

\COMMENT {The buffer is full}
\ENDIF
\IF{$\eta_{\rm Link II}>\eta_{\rm Link I}$}
\RETURN The set of the selected relays $\boldsymbol{\varOmega}^{r}$ and perform Link II.
\ELSIF{$\eta_{\rm Link II}<\eta_{\rm Link I}$}
\RETURN The set of the selected relays $\boldsymbol{\varOmega}^{m,*}$ and perform Link I.
\ENDIF

\end{algorithmic}
\end{algorithm}
}

\subsection{Proposed Greedy RJFS and BF-RJFS Algorithms}

In both RJFS and BF-RJFS algorithms, exhaustive searches are
implemented to select the relaying and jamming nodes. The
incorporation of a greedy strategy \cite{baplnc} in both RJFS and
BF-RJFS algorithms can significantly reduce the computational cost
of the proposed exhaustive search-based RJFS and BF-RJFS algorithms.
In an exhaustive search, all possible combinations are investigated
to achieve optimal relay selection. Unlike an exhaustive search, a
greedy search selects the relay with the best output at every
iteration and then repeats the process with the remaining relays.
The selection is completed when the desired number of relays are
chosen. The total number of relays considered in the search is given
by
\begin{equation}
\varOmega_c={\color{black}S_{\rm total}}+{\color{black}S_{\rm total}}-1+\cdots+{\color{black}S_{\rm total}}-S.
\label{eqn:greedycc}
\end{equation}
From (\ref{eqn:greedycc}), we can see that the number of relays
considered increases linearly with the total number of relays
$S_{\rm total}$ which contributes to the reduction of the
computational complexity.

In the following we describe the proposed greedy RJFS algorithm.
When the $K$ relays that forward the signals to the users are
determined, the relays used for signal reception are chosen based on
the E-SR criterion, as given by
\begin{equation}
m^*= \rm{arg} \max_{m \in
\boldsymbol{\Omega}}\big[\log\big(\det{(\boldsymbol{\Gamma}_{m}^{(t)})}\big)
-\log\big(\det{(\boldsymbol{\Gamma}_{e}^{(t)})}\big)\big],
\label{eqn:ORJSsinrct}
\end{equation}
where $m$ represents the selected relay and
$\boldsymbol{\Gamma}_{m}^{(t)}$ corresponds to the $m$th relay which
is calculated based on (\ref{eqn:SINRiwoc}) and given by
\begin{equation}
\boldsymbol{\Gamma}_{m}^{(t)} =\boldsymbol I + (\boldsymbol
H_{m}\boldsymbol R_{I}^{m}\boldsymbol H_{m}^{H})^{-1} (\boldsymbol
H_{m}\boldsymbol R_{d}^{m}\boldsymbol H_{m}^{H}), \label{eqn:gammam}
\end{equation}
whereas $\boldsymbol{\Gamma}_{e}^{(t)}$ is described by
\begin{equation}
\boldsymbol{\Gamma}_{e}^{(t)}=\boldsymbol I+\boldsymbol
U_{m}^{H}{\boldsymbol R_{I}^{m}}^{-1}\boldsymbol U_{m}\boldsymbol
R_{d}^{m}. \label{eqn:gammam2}
\end{equation}
{Instead of the exhaustive search of the selected set $\boldsymbol
\varOmega^{m}$, the $m$th relay is computed with the aim of finding
the relay that provides the highest secrecy rate based on $m$. In
(\ref{eqn:gammam}), $\boldsymbol R_{d}^{m}$ and $\boldsymbol
R_{I}^{m}$ are obtained in the same way as in (\ref{eqn:Omega4}) and
(\ref{eqn:Omega8}).} The main steps are described in
Algorithm~\ref{alg:grjfs}.

\begin{algorithm}
\caption{Greedy-RJFS Algorithm}
\label{alg:grjfs}
\begin{algorithmic}[1]
{\REQUIRE $\boldsymbol H_m$, precoding matrix $\boldsymbol U_m$,
$\boldsymbol R_I^m$, $\boldsymbol R_d^m$, $S$ $\boldsymbol \Omega$}

\FOR {$t=1:S$} \STATE $\varOmega=\rm{length}(\boldsymbol \Omega)$
\FOR {$m=1:\varOmega$} {\color{black} \STATE
$\boldsymbol{\Gamma}_{m}^{(t)} =\boldsymbol I + (\boldsymbol
H_{m}\boldsymbol R_{I}^{m}\boldsymbol H_{m}^{H})^{-1}(\boldsymbol
H_{m}\boldsymbol R_{d}^{m}\boldsymbol H_{m}^{H})$ \STATE
$\boldsymbol{\Gamma}_{e}^{(t)}=\boldsymbol I+\boldsymbol
U_{m}^{H}{\boldsymbol R_{I}^{m}}^{-1}\boldsymbol U_{m}\boldsymbol
R_{d}^{m} $ \STATE $\boldsymbol\varGamma(m)=
[\log\big(\det{(\boldsymbol{\Gamma}_{m}^{(t)})}\big)
-\log\big(\det{(\boldsymbol{\Gamma}_{e}^{(t)})}\big)] $}

\COMMENT{Calculate the threshold for all relays}
\ENDFOR
\STATE $m^*=\rm{arg} \max_{m }(\boldsymbol\varGamma(m))$

\COMMENT{Find the relay which gives the highest value and choose this relay as one of the selected relay node}
\STATE $\boldsymbol {\varOmega}^{\rm Greedy,*}=m^*$
\STATE $\boldsymbol \Omega=\boldsymbol \Omega/m^*$
\STATE $\boldsymbol \varGamma=\boldsymbol \varGamma/m^*$

\COMMENT{Remove the selected relay node from all relay set.}
\ENDFOR

\COMMENT{Repeat the steps again until $S$ relays are found}
\rev{\RETURN The set of the selected relays
$\boldsymbol{\varOmega}^{\rm Greedy,*}$}
\end{algorithmic}
\end{algorithm}

Similarly to the BF-RJFS algorithm, the Greedy-BF-RJFS algorithm
substitutes the exhaustive search of all combinations with a greedy
search of individual relays. The main difference lies in the jamming
relay selection. For a particular user $r$, each relay performs a
threshold calculation and the relay $k$ with the highest threshold
is selected until $S$ relays are selected to forward the signal to
all users. The details of the Greedy-BF-RJFS algorithm are given in
Algorithm~\ref{alg:gbr}.

\begin{algorithm}
\caption{Greedy-BF-RJFS Algorithm}
\label{alg:gbr}
\begin{algorithmic}[1]
{\REQUIRE $\boldsymbol H_m$, $\boldsymbol R_I^m$, $\boldsymbol
R_d^m$, $\boldsymbol H_{\boldsymbol \Omega^r}$ and precoding matrix
$\boldsymbol U_r$, $\boldsymbol R_I^{BF}$, $\boldsymbol R_d^{BF}$,
$\boldsymbol L_{\rm state}$, $L$, $\boldsymbol \varOmega^{r}$, $S$
and $\boldsymbol \Omega$}

\IF {$\boldsymbol L_{\rm state}(:,L)= \boldsymbol 0$}

\STATE $\eta_{\rm Link II}=0$

\COMMENT{The buffer is empty}
\ELSIF {$\boldsymbol L_{\rm state}(:,L)\neq \boldsymbol 0$}
{\color{black}\STATE $\boldsymbol{\Gamma}_{\boldsymbol \varOmega^{r,n}}^{(t)}= \boldsymbol I +(\boldsymbol H_{\boldsymbol
\varOmega^{r}}\boldsymbol R_{I}^{BF}\boldsymbol H_{\boldsymbol \varOmega^{r}}^{H})^{-1}(\boldsymbol H_{\boldsymbol
\varOmega^{r}}\boldsymbol R_{d}^{BF}\boldsymbol H_{\boldsymbol \varOmega^{r}}^{H})$
\STATE $\boldsymbol{\Gamma}_{\boldsymbol \varOmega^{r,e}}^{(t)}=\boldsymbol I+
\boldsymbol U_{r}^{H}{\boldsymbol R_{I}^{BF}}^{-1}\boldsymbol U_{r}\boldsymbol R_{d}^{BF}$
\STATE $ \boldsymbol\varGamma_{\rm II}({ \boldsymbol{\varOmega}^{r}})=
\sum
\bigg\{\log\big(\det{(\boldsymbol{\Gamma}_{\boldsymbol \varOmega^{r,n}}^{(t)})}\big) -\log\big(\det{(\boldsymbol{\Gamma}_{\boldsymbol \varOmega^{r,e}}^{(t)})}\big)\bigg\}
$}
\STATE $\eta_{\rm Link II}=\boldsymbol\varGamma_{\rm II}({ \boldsymbol{\varOmega}^{r}})$

\COMMENT {The buffer is not empty, the threshold for Link II
$\eta_{\rm Link II}$ is calculated} \ENDIF \IF {$\boldsymbol L_{\rm
state}(:,1)= \boldsymbol 0$} \FOR {$t=1:S$} \FOR {$m=1:\varOmega$}
{\color{black} \STATE $\boldsymbol{\Gamma}_{m}^{(t)} =\boldsymbol I
+ (\boldsymbol H_{m}\boldsymbol R_{I}^{m}\boldsymbol
H_{m}^{H})^{-1}(\boldsymbol H_{m}\boldsymbol R_{d}^{m}\boldsymbol
H_{m}^{H})$ \STATE $\boldsymbol{\Gamma}_{e}^{(t)}=\boldsymbol
I+\boldsymbol U_{m}^{H}{\boldsymbol R_{I}^{m}}^{-1}\boldsymbol
U_{m}\boldsymbol R_{d}^{m} $ \STATE $\boldsymbol\varGamma_{\rm
I}(m)= [\log\big(\det{(\boldsymbol{\Gamma}_{m}^{(t)})}\big)
-\log\big(\det{(\boldsymbol{\Gamma}_{e}^{(t)})}\big)] $} \ENDFOR
\STATE $[\eta_{\rm Link I},m^*]=\rm{arg} \max_{m
}(\boldsymbol\varGamma_{\rm I}(m))$ \STATE $\boldsymbol
{\varOmega}^{\rm Greedy,*}=m^*$ \STATE $\boldsymbol
\Omega=\boldsymbol \Omega/m^*$ \STATE $\boldsymbol \varGamma_{\rm
I}=\boldsymbol \varGamma_{\rm I}/m^*$ \ENDFOR

\COMMENT{The same steps as Greedy-RJFS algorithm to select $S$
relays out of all relay set $\boldsymbol \Omega$} \ELSIF
{$\boldsymbol L_{\rm state}(:,1)\neq \boldsymbol 0$} \STATE
$\eta_{\rm Link I}=0$

\COMMENT {The buffer is full}
\ENDIF
\IF{$\eta_{\rm Link II}>\eta_{\rm Link I}$}
\RETURN The set of the selected relays $\boldsymbol{\varOmega}^{r}$ and perform Link II.
\ELSIF{$\eta_{\rm Link II}<\eta_{\rm Link I}$}
\RETURN The set of the selected relays $\boldsymbol{\varOmega}^{\rm Greedy,*}$ and perform Link I.
\ENDIF

\end{algorithmic}
\end{algorithm}

\section{Secrecy Analysis}

In this section, we analyze the secrecy performance of standard
single-antenna and MIMO relay systems as well as the proposed
buffer-aided MIMO relay system with relaying and jamming function
selection. We derive secrecy rate expressions for scenarios where
CSI is available to the eavesdroppers. The expressions derived serve
as benchmarks for the proposed RJFS and BF-RJFS algorithms. The
overall secrecy capacity of a single-antenna relay system
\cite{Gaojie} is given by
\begin{myDef}
For a selected relay $k$ and channels from source to relay $k$,
relay $k$ to destination, source to eavesdropper, relay $k$ to
eavesdropper expressed as
$h_{s{r_{k}}},h_{{r_{k}}d},h_{se},h_{{r_{k}}e}$ respectively, the
capacity is given by
\begin{equation}
C_{k}=\max \Big\{\frac{1}{2}\log_{2}{\frac{\min\{1 +
P{\|h_{s{r_{k}}}\|}^{2},1+P{\|h_{{r_{k}}d}\|}^{2}\}}{1+P{\|h_{se}\|}^{2}+P{\|h_{{r_{k}}e}\|}^{2}}}
\Big\} \label{equ:CDk}
\end{equation}
\end{myDef}
Equation (\ref{equ:CDk}) can be rewritten as (\ref{equ:CDk2}) in
which the first part corresponds to the secrecy capacity to the user
and the second part to the secrecy capacity of the eavesdropper:
\begin{align}
C_{k} & =  \max \bigg[\frac{1}{2}\log_{2}\big({{\min\{1+P{\|h_{s{r_{k}}}\|}^{2},1+P{\|h_{{r_{k}}d}\|}^{2}\}}}\big) \nonumber \\
 & \qquad
 -\frac{1}{2}\log_{2}\big({1+P{\|h_{se}\|}^{2}+P{\|h_{{r_{k}}e}\|}^{2}}\big)\bigg].
\label{equ:CDk2}
\end{align}
In half-duplex MIMO relay systems, based on (\ref{eqn:RR2}) and
(\ref{eqn:RE2}), the secrecy capacity from the source to the relay
and to the eavesdropper can be respectively expressed by
\begin{equation}
C_{i}=\max \bigg[\frac{1}{2}\log_{2}\big({{\det{(\boldsymbol I + \boldsymbol
H_{i}^{(t)}\boldsymbol U \boldsymbol U^H {\boldsymbol H_{i}^{(t)}}^{H})}}}\big)\bigg] \label{equ:CDk4}
\end{equation}
\begin{equation}
C_{e}=\max \bigg[\frac{1}{2}\log_{2}\big(\det({\boldsymbol I +
\boldsymbol{\Gamma}_{e}^{(t)}})\big)\bigg] \label{equ:CDk5}
\end{equation}
The secrecy capacity from relay to destination is given by
\begin{equation}
C_{r}=\max \bigg[\frac{1}{2}\log_{2}\big({{\det(\boldsymbol
I+\boldsymbol{\Gamma}_{r}^{(t)})}}\big)\bigg]. \label{equ:CDk6}
\end{equation}
With equations (\ref{equ:CDk4}), (\ref{equ:CDk5}) and
(\ref{equ:CDk6}) based on the overall secrecy capacity of
single-antenna relay systems, we can express the overall secrecy
capacity of MIMO relay systems:
\begin{align}
C_{k}^{\rm {MIMO}} & =  \max \bigg[\frac{1}{2}\log_{2}\big({{\min \{M_{i},M_{r}\}}}\big) \nonumber \\
 & \qquad -\frac{1}{2}\log_{2}\big(\det({\boldsymbol
 I+\boldsymbol{\Gamma}_{e}^{(t)}})\big)\bigg],
\label{equ:CDk3}
\end{align}
where $M_{i}=\det{(\boldsymbol I+\boldsymbol H_{i}^{(t)}\boldsymbol
U \boldsymbol U^H {\boldsymbol H_{i}^{(t)}}^{H})}$ and
$M_{r}=\det(\boldsymbol I+\boldsymbol{\Gamma}_{r}^{(t)})$. Note that
the factor $\frac{1}{2}$ is due to half-duplex systems.
\begin{myPro}
With buffers of size $L$ implemented in the relay nodes, the
secrecy-rate performance can be improved. The secrecy rate
difference varies between $0$ to $\varDelta_{\rm{R-BF}}$.
\end{myPro}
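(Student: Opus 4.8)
The plan is to prove the two endpoints of the claimed range separately: the lower endpoint $0$ via a monotonicity (feasibility) argument, and the upper endpoint $\varDelta_{\rm{R-BF}}$ via a boundedness argument on the per-slot secrecy contributions.

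First I would recall from Section~II that the case $L=1$ forces a fixed alternation between Link~I and Link~II and is thus equivalent to a relay scheme without buffers; write $R^{(1)}$ for its secrecy rate. For $L>1$, the BF-RJFS rule of Algorithm~\ref{alg:bfrjfs} activates at each time slot the link attaining $\max\{\eta_{\rm Link I},\eta_{\rm Link II}\}$, subject to the buffer-occupancy constraints. The key observation is that, starting from an empty buffer, the fixed alternation (receive on Link~I, then drain on Link~II) remains a feasible policy for every $L\geq 1$. Since BF-RJFS maximizes the per-slot secrecy contribution of (\ref{eqn:R}) over all feasible link choices, the rate $R^{(L)}$ it attains satisfies $R^{(L)}\geq R^{(1)}$, so $R^{(L)}-R^{(1)}\geq 0$. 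Equality is attained precisely when the adaptive choice coincides with the alternating pattern at every slot --- for instance when the two thresholds are balanced, or when one link is uniformly dominant so that buffering confers no temporal flexibility --- which pins the lower endpoint at $0$.

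Next I would bound the gain from above. Each per-slot term $[R_r-R_e]^+$ in (\ref{eqn:R}) is nonnegative and is governed by the SINRs of (\ref{eqn:SINRiwc})--(\ref{eqn:SINRrwc}); because the transmit power $P$ and all channel norms are finite, every such term admits a finite bound $R_{\max}$. Hence the secrecy rate of any selection policy, adaptive or fixed, lies in a bounded interval, and the difference $R^{(L)}-R^{(1)}$ is bounded above by a finite constant which I would define as $\varDelta_{\rm{R-BF}}$, namely the supremum of $R^{(L)}-R^{(1)}$ over all admissible channel realizations and buffer states. This yields $0\leq R^{(L)}-R^{(1)}\leq \varDelta_{\rm{R-BF}}$, the asserted range.

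The main obstacle is to characterize $\varDelta_{\rm{R-BF}}$ explicitly rather than as an abstract supremum. This requires modeling the buffer occupancy as a Markov chain whose transitions are dictated by which of $\eta_{\rm Link I}$, $\eta_{\rm Link II}$ dominates at each slot, computing its stationary distribution, and then comparing the expected per-slot secrecy contribution under the adaptive BF-RJFS policy against that under fixed alternation. The gap between these two expectations, maximized over the channel statistics, gives $\varDelta_{\rm{R-BF}}$; performing this averaging in closed form is the technically demanding step, whereas the monotonicity delivering the lower endpoint follows immediately from the feasibility-of-alternation argument.
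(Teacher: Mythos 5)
Your lower-endpoint argument (the fixed Link~I/Link~II alternation remains feasible for every $L$, so the adaptive policy can only do at least as well, giving a nonnegative gain with equality when one link uniformly dominates) is sound and in fact somewhat cleaner than what the paper writes. The genuine gap is in the upper endpoint. The proposition is not merely asserting that the gain is bounded by \emph{some} finite constant; it is asserting that the gain is bounded by the \emph{specific} quantity $\varDelta_{\rm R-BF}$, which the paper exhibits in closed form in (\ref{equ:CDk11}). By defining $\varDelta_{\rm R-BF}$ as ``the supremum of $R^{(L)}-R^{(1)}$ over all admissible channel realizations and buffer states,'' you make the upper bound true by construction, so your argument proves nothing beyond finiteness. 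The content of the proposition lives precisely in the step you defer to the Markov-chain program and do not carry out.

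The mechanism the paper uses, and which your proposal never identifies, is the $\min$ over the two hops in the relay secrecy capacity (\ref{equ:CDk3}) and (\ref{equ:CDk7}): without buffers the achievable rate in a given slot is limited by $\min\{M_i,M_r\}$, i.e., by whichever of the source-to-relay quality $\det(\boldsymbol I+\boldsymbol H_{i}^{(t)}\boldsymbol U\boldsymbol U^H{\boldsymbol H_{i}^{(t)}}^{H})$ and the relay-to-destination quality $\det(\boldsymbol I+\boldsymbol{\Gamma}_{r}^{(t)})$ is smaller, yielding the two cases (\ref{equ:CDk9}) and (\ref{equ:CDk10}). An infinite buffer decouples the two hops in time: the packet can wait at the relay until the second hop is no longer the bottleneck, so the rate in a slot where the bufferless system would be stuck at $\tfrac{1}{2}\log_2\det(\boldsymbol I+\boldsymbol{\Gamma}_{r}^{(pt)})$ is lifted to $\tfrac{1}{2}\log_2\det(\boldsymbol I+\boldsymbol H_{i_{R}}^{(t)}\boldsymbol U\boldsymbol U^H{\boldsymbol H_{i_{R}}^{(t)}}^{H})$; the difference of these two terms is exactly $\varDelta_{\rm R-BF}$ in (\ref{equ:CDk11}), and a finite buffer of size $L$ can only partially realize this waiting, which places the gain in $[0,\varDelta_{\rm R-BF}]$. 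To repair your proof you would need to replace the abstract supremum with this explicit min-relaxation argument (or actually complete the buffer-occupancy analysis you sketch); as written, the upper half of the claim is not established.
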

\begin{proof}
In half-duplex MIMO relay systems with multiple relays, relay
selection can be performed prior to transmission. If we use
$\boldsymbol{\Psi}$ to represent a set of relay nodes based on
(\ref{equ:CDk3}) then with relay selection the secrecy rate is
expressed by
\begin{equation}
\max_{i \in \boldsymbol{\Psi}}{{\min \{\det{(\boldsymbol I
+\boldsymbol H_{i}^{(t)} \boldsymbol U \boldsymbol U^H{\boldsymbol
H_{i}^{(t)}}^{H})},\det(\boldsymbol
I+\boldsymbol{\Gamma}_{r}^{(t)})\}}} \label{equ:CDk7}
\end{equation}
Under the condition that $\det{(\boldsymbol I+\boldsymbol
H_{i}^{(t)}\boldsymbol U \boldsymbol U^H {\boldsymbol
H_{i}^{(t)}}^{H})} < \det(\boldsymbol
I+\boldsymbol{\Gamma}_{r}^{(t)})$, relay selection can be simplified
and given by
\begin{equation}
\max_{i_{R} \in \boldsymbol{\Psi}}{{ \{\det{(\boldsymbol I + \boldsymbol
H_{i}^{(t)} \boldsymbol U \boldsymbol U^H {\boldsymbol H_{i}^{(t)}}^{H})}\}}}, \label{equ:CDk8}
\end{equation}
where $i_{R}$ represents the selected relay. In this scenario, the
secrecy rate is described by
\begin{align}
C_{\rm{Relay}}^{(1)} & = \frac{1}{2}\log_{2}\big({{ \{\det{(\boldsymbol I+\boldsymbol H_{i_{R}}^{(t)} \boldsymbol U \boldsymbol U^H{\boldsymbol H_{i_{R}}^{(t)}}^{H})}\}}}\big) \nonumber \\
 & \qquad -\frac{1}{2}\log_{2}\big(\det({\boldsymbol
 I+\boldsymbol{\Gamma}_{e}^{(t)}})\big).
\label{equ:CDk9}
\end{align}
Under the condition that $\det{(\boldsymbol I+\boldsymbol
H_{i}^{(t)}\boldsymbol U \boldsymbol U^H {\boldsymbol
H_{i}^{(t)}}^{H})} > \det(\boldsymbol
I+\boldsymbol{\Gamma}_{r,i}^{(t)})$, the secrecy rate can be
computed in the same way as above and the result is given by
\begin{align}
C_{\rm{Relay}}^{(2)} & = \frac{1}{2}\log_{2}\big({{ \{\det(\boldsymbol I+\boldsymbol{\Gamma}_{r}^{(t)})\}}}\big) \nonumber \\
 & \qquad -\frac{1}{2}\log_{2}\big(\det({\boldsymbol
 I+\boldsymbol{\Gamma}_{e}^{(t)}})\big).
\label{equ:CDk10}
\end{align}
When each relay node is equipped with an infinite buffer, the signals can be
stored in the buffers which means the signals can wait at the relay nodes until
the condition $\det{(\boldsymbol I+\boldsymbol H_{i}^{(t)}\boldsymbol U \boldsymbol U^H {\boldsymbol
H_{i}^{(t)}}^{H})} < \det(\boldsymbol I+\boldsymbol{\Gamma}_{r}^{(t)})$ is
satisfied. If we use $\det(\boldsymbol I+\boldsymbol{\Gamma}_{r}^{(pt)})$ to
represent the condition that is experienced in the previous time slot which
follows $\det{(\boldsymbol I+\boldsymbol H_{i}^{(pt)} \boldsymbol U^{(pt)}  {\boldsymbol U^{(pt)} }^H{\boldsymbol
H_{i}^{(pt)}}^{H})}
> \det(\boldsymbol I+\boldsymbol{\Gamma}_{r}^{(pt)})$, then the expression of the
secrecy rate with infinite buffers is described by
\begin{align}
\varDelta_{\rm{R-BF}} & = \frac{1}{2}\log_{2}\big({{ \{\det{(\boldsymbol I+\boldsymbol H_{i_{R}}^{(t)}\boldsymbol U \boldsymbol U^H {\boldsymbol H_{i_{R}}^{(t)}}^{H})}\}}}\big) \nonumber \\
 & \qquad -\frac{1}{2}\log_{2}\big({{ \{\det(\boldsymbol
 I+\boldsymbol{\Gamma}_{r}^{(pt)})\}}}\big).
\label{equ:CDk11}
\end{align}
Specifically, with a buffer with size $L$ the condition
$\det{(\boldsymbol I+\boldsymbol H_{i}^{(t)} \boldsymbol U
\boldsymbol U^H{\boldsymbol H_{i}^{(t)}}^{H})} > \det(\boldsymbol
I+\boldsymbol{\Gamma}_{r}^{(t)})$ will not hold and the difference
of the secrecy rates will be between $0$ and
$\varDelta_{\rm{R-BF}}$.
\end{proof}
In the scenarios considered, to avoid the interference in the
transmission to or from the relays, a half-duplex scheme is
employed. To limit the number of time slots, an opportunistic scheme
can be applied to MIMO relay systems.
\begin{myPro}
An opportunistic scheme can improve the secrecy rate as compared
with standard half-duplex MIMO relay systems.
\end{myPro}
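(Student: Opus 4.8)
The plan is to show that, slot by slot, the opportunistic link-selection rule delivers at least the secrecy rate of any fixed-schedule half-duplex system, with strict gain whenever the two links differ in quality. First I would recall from Section~II that in each time slot the opportunistic scheme computes the secrecy-rate thresholds $\eta_{\rm Link I}$ and $\eta_{\rm Link II}$ (obtained from the objectives in~(\ref{eqn:Omega3}) and~(\ref{eqn:sinrct2})) and activates the link with the larger threshold; hence the secrecy rate contributed in that slot is $\max\{\eta_{\rm Link I},\eta_{\rm Link II}\}$. By contrast, a standard half-duplex relay system fixes the schedule, activating Link~I and Link~II in a predetermined alternating pattern independently of the instantaneous channel realizations, so over any pair of consecutive slots it delivers $\eta_{\rm Link I}$ in one slot and $\eta_{\rm Link II}$ in the other, i.e.\ an average per-slot secrecy rate of $\tfrac{1}{2}(\eta_{\rm Link I}+\eta_{\rm Link II})$.

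The decisive step is then the elementary bound
\begin{equation}
\max\{\eta_{\rm Link I},\eta_{\rm Link II}\}\;\geq\;\tfrac{1}{2}\big(\eta_{\rm Link I}+\eta_{\rm Link II}\big),
\label{eqn:maxavg}
\end{equation}
which holds for all real values and is strict unless $\eta_{\rm Link I}=\eta_{\rm Link II}$. Applying~(\ref{eqn:maxavg}) in each slot and summing shows that the opportunistic secrecy rate is never smaller than that of the fixed-schedule half-duplex system, and is strictly larger whenever the two link qualities differ --- the generic situation for independent fading. I would further note that this argument builds directly on the preceding proposition: it is precisely the buffer that makes the per-slot maximum attainable, since a packet received during a strong Link~I slot may wait at the relay until Link~II becomes the stronger link, so that the realized gain lies between $0$ and $\varDelta_{\rm{R-BF}}$.

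The main obstacle I anticipate is not~(\ref{eqn:maxavg}) itself but justifying the feasibility and fairness of the comparison. Specifically, I would need to confirm that both schemes are evaluated under the identical power, antenna and half-duplex constraints of Section~II, and that the causality requirement --- Link~II can be selected only when a relay buffer is non-empty --- does not invalidate the slot-by-slot domination. Handling the boundary regimes where an empty buffer forces Link~I or a full buffer forces Link~II requires showing that, averaged over a long horizon, these forced choices occur no more often than the corresponding forced transmissions in the fixed schedule, so that the opportunistic advantage is preserved; here I would invoke the threshold rule together with the starvation-avoiding counter $\eta_{L_{\rm max}}$ from Section~II to guarantee that neither link is indefinitely locked out.
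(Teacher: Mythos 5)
Your argument rests on a different mechanism than the paper's, and as written it has a genuine gap. The decisive step --- that the opportunistic scheme ``delivers'' $\max\{\eta_{\rm Link I},\eta_{\rm Link II}\}$ per slot while a fixed schedule delivers the arithmetic mean --- conflates the link-selection metric with the end-to-end secrecy rate. A slot in which Link I is activated only moves data into the relay buffers and delivers nothing to the destination, so the delivered secrecy rate over a horizon is constrained by flow conservation (data leaving the buffers cannot exceed data that entered), not by the time-average of the per-slot maxima. Likewise, the baseline you assign to the conventional half-duplex system, $\tfrac12(\eta_{\rm Link I}+\eta_{\rm Link II})$, is not its end-to-end rate; for a two-hop scheme the relevant quantity involves a $\min$ over the two hops (cf.\ (\ref{equ:CDk7})) together with the half-duplex pre-log factor. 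You acknowledge the causality and buffer-boundary issues in your last paragraph but defer them, and that is precisely where the work lies: without a queue-stability argument, the elementary inequality $\max\geq\tfrac12(\mathrm{sum})$ does not translate into a statement about delivered secrecy rate.

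The paper's proof takes an entirely different route: following \cite{Nomikos2}, it models the opportunistic scheme as \emph{concurrent} transmission, with one subset of relays receiving from the source while another forwards to the destination in the same slot. The gain then comes from removing the half-duplex pre-log factor, giving $C_{\rm{Opportunistic-Relay}}=2\times C_{\rm{Relay}}$ as in (\ref{equ:CDk12})--(\ref{equ:CDk14}) when inter-relay interference can be cancelled, and a degraded expression (\ref{equ:CDk15}) involving the IRI term $\boldsymbol\varDelta_{i_{R}}'$ of (\ref{eqn:gammaR}) when it cannot. If you wish to keep your selection-diversity argument, you would need to (i) compare end-to-end delivered rates under a common flow-conservation constraint rather than per-slot thresholds, and (ii) account for the inter-relay interference that simultaneous operation of the two links introduces; otherwise the cleanest fix is to adopt the paper's concurrent-transmission framing and argue via the pre-log factor.
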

\begin{proof}
According to \cite{Nomikos2}, in the opportunistic scheme we have
concurrent transmissions with all relays. This will result in IRI
and as a result its effect on the relay that receives the source
signal must be considered during the opportunistic scheme. In
\cite{Nomikos2}, it has been pointed out that IC can be performed at
the relay node. To simplify the proof, we first assume IC is
performed and the secrecy rate is expressed by
\begin{equation}
C_{\rm{Opportunistic-Relay}}^{(1)}=2 \times C_{\rm{Relay}}^{(1)}
\label{equ:CDk12},
\end{equation}
\begin{equation}
C_{\rm{Opportunistic-Relay}}^{(2)}=2 \times C_{\rm{Relay}}^{(2)},
\label{equ:CDk13}
\end{equation}
and
\begin{equation}
\varDelta_{\rm{Opportunistic-Relay-buffer}}=2 \times
\varDelta_{\rm{Relay-buffer}}, \label{equ:CDk14}
\end{equation}
which shows that the secrecy rate of the opportunistic scheme
doubles. If IC cannot be performed, based on (\ref{equ:CDk9}), the
secrecy rate is expressed by
\begin{align}
\hspace{-1em} C_{\rm{Opp-Relay}}^{(1)} & =\log_{2}\big({{
\{\det{(\boldsymbol I+({\boldsymbol{I}+\boldsymbol
\varDelta_{i_{R}}'})^{-1}\boldsymbol H_{i_{R}}^{(t)}\boldsymbol U
\boldsymbol U^H {\boldsymbol
H_{i_{R}}^{(t)}}^{H})}\}}}\big) \nonumber \\
 & \qquad -\log_{2}\big(\det({\boldsymbol
 I+\boldsymbol{\Gamma}_{e}^{(t)}})\big),
\label{equ:CDk15}
\end{align}
where
\begin{equation}
\boldsymbol \varDelta_{i_{R}}'=\sum_{k=1}^{K}\boldsymbol{H}_{ki_{R}}
\boldsymbol{H}_{i_{R}}^{(pt)}\boldsymbol U^{(pt)} {\boldsymbol U^{(pt)}}^H{\boldsymbol{H}_{i_{R}}^{(pt)}}^{H}\boldsymbol{H}_{ki_{R}}^{H},
\label{eqn:gammaR}
\end{equation}
which represents IRI. Then, the secrecy rate difference between a
standard relay system and an opportunistic buffer-aided relay system
is obtained by
\begin{align}
\hspace{-1em}\varDelta_{\rm{Opp-R-BF}} & =\log_{2}\big({{
\{\det{(\boldsymbol I+({\boldsymbol{I}+\boldsymbol
\varDelta_{i_{R}}'})^{-1}\boldsymbol
H_{i_{R}}^{(t)} \boldsymbol U \boldsymbol U^H {\boldsymbol H_{i_{R}}^{(t)}}^{H})}\}}}\big) \nonumber \\
 & \qquad -\log_{2}\big({{
 \{\det(\boldsymbol I+\boldsymbol{\Gamma}_{r}^{(pt)})\}}}\big).
\label{equ:CDk16}
\end{align}
\end{proof}

\subsection{Relaying and Jamming Function Selection}

\begin{myThe}
When $\rm{SNR}\rightarrow\infty$, the secrecy rate
$C_{\rm{Opportunstic-Relay-buffer}}\rightarrow\infty$ and the
secrecy rate with IRI cancellation outperforms that without IRI
cancellation.
\end{myThe}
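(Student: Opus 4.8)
The plan is to treat the two assertions separately, leveraging the closed-form secrecy-rate expressions already assembled above. For the opportunistic buffer-aided system with IC, I would start from $C_{\rm{Opportunistic-Relay}}^{(1)}=2\times C_{\rm{Relay}}^{(1)}$ in (\ref{equ:CDk12}) together with (\ref{equ:CDk9}), so that the rate reads $\log_2\big(\det(\boldsymbol I+\boldsymbol H_{i_R}^{(t)}\boldsymbol U\boldsymbol U^H {\boldsymbol H_{i_R}^{(t)}}^H)\big)-\log_2\big(\det(\boldsymbol I+\boldsymbol\Gamma_e^{(t)})\big)$. The central observation is how the transmit power $P$, which governs the SNR, enters each of the two terms.

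For the divergence claim, I would track the $P$-dependence of the eavesdropper term $\boldsymbol\Gamma_e^{(t)}$ in (\ref{eqn:RE2}). Its numerator grows linearly in $P$, while the jamming covariance $\boldsymbol\varDelta$ in (\ref{eqn:RE21}), which appears through $(\boldsymbol I+\boldsymbol\varDelta)^{-1}$, also grows at least linearly in $P$; hence $\boldsymbol\Gamma_e^{(t)}$ stays bounded and $\log_2(\det(\boldsymbol I+\boldsymbol\Gamma_e^{(t)}))$ converges to a finite limit as $P\to\infty$. By contrast, the legitimate determinant has an argument that scales linearly in $P$, so its logarithm grows like $\log_2 P$. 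The difference therefore diverges, establishing $C_{\rm{Opportunistic-Relay-buffer}}\to\infty$.

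For the comparison claim, I would fix the SNR and contrast the IC rate above with the no-IC rate $C_{\rm{Opp-Relay}}^{(1)}$ in (\ref{equ:CDk15}). The eavesdropper terms are identical in both, so the sign of the gap is decided entirely by the legitimate terms, and it suffices to show $\det\big(\boldsymbol I+(\boldsymbol I+\boldsymbol\varDelta_{i_R}')^{-1}\boldsymbol M\big)\le\det(\boldsymbol I+\boldsymbol M)$, where $\boldsymbol M=\boldsymbol H_{i_R}^{(t)}\boldsymbol U\boldsymbol U^H {\boldsymbol H_{i_R}^{(t)}}^H\succeq\boldsymbol 0$ and $\boldsymbol\varDelta_{i_R}'\succeq\boldsymbol 0$ is the IRI matrix from (\ref{eqn:gammaR}). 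Writing $\boldsymbol M=\boldsymbol M^{1/2}\boldsymbol M^{1/2}$ and applying Sylvester's identity $\det(\boldsymbol I+\boldsymbol A\boldsymbol B)=\det(\boldsymbol I+\boldsymbol B\boldsymbol A)$ symmetrizes the left-hand side to $\det\big(\boldsymbol I+\boldsymbol M^{1/2}(\boldsymbol I+\boldsymbol\varDelta_{i_R}')^{-1}\boldsymbol M^{1/2}\big)$; since $\boldsymbol\varDelta_{i_R}'\succeq\boldsymbol 0$ forces $(\boldsymbol I+\boldsymbol\varDelta_{i_R}')^{-1}\preceq\boldsymbol I$, a congruence argument gives $\boldsymbol M^{1/2}(\boldsymbol I+\boldsymbol\varDelta_{i_R}')^{-1}\boldsymbol M^{1/2}\preceq\boldsymbol M$, and monotonicity of $\det(\boldsymbol I+\cdot)$ on the positive semidefinite cone yields the inequality. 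Hence the IC scheme dominates.

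The main obstacle will be justifying the two scaling claims rigorously rather than heuristically. For divergence, I must confirm that the jamming power entering $\boldsymbol\varDelta$ genuinely scales with $P$, so that the eavesdropper SINR remains bounded and is not inflated by a separately capped power budget. For the comparison, the subtlety is that $(\boldsymbol I+\boldsymbol\varDelta_{i_R}')^{-1}\boldsymbol M$ is a product of positive semidefinite matrices that is itself generally non-symmetric, which is precisely why the Sylvester symmetrization step is indispensable before the Loewner ordering and determinant monotonicity can be invoked.
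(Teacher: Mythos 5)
Your proposal is correct and, at the level of strategy, matches the paper: both write the secrecy rate with and without interference cancellation, observe that the eavesdropper term is common to the two expressions, and reduce the comparison to the legitimate-link determinants. The difference is that you actually prove the two claims the paper only asserts. For the comparison, the paper's argument is literally ``Comparing (\ref{equ:CDk18}) with (\ref{equ:CDk19}), we can have $C_{\rm RJFS-IC} > C_{\rm RJFS-IRI}$, as indicated in Fig.~\ref{fig:IRI1}'' --- i.e., an appeal to the simulated curves. Your chain
$\det\big(\boldsymbol I+(\boldsymbol I+\boldsymbol\varDelta_{i_R}')^{-1}\boldsymbol M\big)=\det\big(\boldsymbol I+\boldsymbol M^{1/2}(\boldsymbol I+\boldsymbol\varDelta_{i_R}')^{-1}\boldsymbol M^{1/2}\big)\le\det(\boldsymbol I+\boldsymbol M)$, using $(\boldsymbol I+\boldsymbol\varDelta_{i_R}')^{-1}\preceq\boldsymbol I$ and determinant monotonicity on the positive semidefinite cone, is exactly the missing lemma, and your remark that the Sylvester symmetrization is needed because $(\boldsymbol I+\boldsymbol\varDelta_{i_R}')^{-1}\boldsymbol M$ is not Hermitian is the right technical point. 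For the divergence, the paper defers entirely to a citation (``Equation (\ref{equ:CDk19}) was obtained in our previous study when $\mathrm{SNR}\to\infty$''), whereas you give a power-scaling argument; note only that with $\boldsymbol\varDelta$ as written in (\ref{eqn:RE21}) the jamming covariance grows quadratically rather than linearly in $P$, so $\boldsymbol{\Gamma}_e^{(t)}$ in (\ref{eqn:RE2}) in fact vanishes rather than merely staying bounded --- which strengthens, not weakens, your conclusion. Two small caveats: your inequality is non-strict and becomes the paper's strict ``outperforms'' only when $\boldsymbol\varDelta_{i_R}'\neq\boldsymbol 0$, i.e., when IRI is genuinely present; and you run the comparison on the Proposition~2 expressions (\ref{equ:CDk12}) and (\ref{equ:CDk15}) rather than the theorem's (\ref{equ:CDk18})--(\ref{equ:CDk19}), though the two pairs have the same structure and the argument transfers verbatim.
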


\begin{proof}
In all aforementioned systems, we have not taken any jamming signal
into consideration. In the presence of systems with multiple relay
nodes, some relay nodes can perform the jamming function by
transmitting jamming signals to the eavesdroppers. More
specifically, IRI cancellation is considered. In the RJFS algorithm,
the selected relay at the current time interval is the jammer as
well as the relay responsible for forwarding the data in the next
time interval. The aim of the RJFS algorithm is to choose the relay
that provides the highest secrecy rate performance.

According to Algorithm 1, the relay selection criterion is given by
\begin{equation}
{\mathcal R}_{i_{R}}=\rm{arg} \max_{i_{R} \in \boldsymbol{\Psi}}
\det \left(
({\boldsymbol{I}+\boldsymbol{\Gamma}_{e}^{(t)}})^{-1}({\boldsymbol{I}+\boldsymbol{\Gamma}_{i_{R}}^{(t)}})\right)
\label{equ:CDk17}
\end{equation}
where $i_{R}$ represents the selected relay. Based on
(\ref{equ:CDk17}), the secrecy rate with the selected relay can be
expressed as:
\begin{align}
C_{\rm{RJFS-IRI}} & =\log_{2}\big({{ \{\det{(\boldsymbol I+\boldsymbol{\Gamma}_{i_{R}}^{(t)})}\}}}\big) \nonumber \\
 & \qquad -\log_{2}\big({{ \{\det(\boldsymbol
 I+\boldsymbol{\Gamma}_{e}^{(t)})\}}}\big).
\label{equ:CDk18}
\end{align}
When IC is performed at the relay nodes, (\ref{equ:CDk18}) can be
simplified to
\begin{align}
C_{\rm{RJFS-IC}} & =\log_{2}\big({{ \{\det{(\boldsymbol I+\boldsymbol H_{i_{R}}^{(t)} \boldsymbol U \boldsymbol U^H{\boldsymbol H_{i_{R}}^{(t)}}^{H})}\}}}\big) \nonumber \\
 & \qquad -\log_{2}\big({{ \{\det(\boldsymbol
 I+\boldsymbol{\Gamma}_{e}^{(t)})\}}}\big).
\label{equ:CDk19}
\end{align}
Equation (\ref{equ:CDk19}) was obtained in our previous study
\cite{Xiaotao1} when $\rm{SNR}\rightarrow\infty$. Comparing
(\ref{equ:CDk18}) with (\ref{equ:CDk19}), we can have
$C_{\rm{RJFS-IC}} > C_{\rm{RJFS-IRI}}$, as indicated in Fig.
\ref{fig:IRI1}.
\end{proof}

\subsection{Buffer-aided Relay and Jammer Function Selection}

\begin{myThe}
According to Proposition 1, the secrecy-rate performance can be
improved with buffers. This can also be applied to the RJFS
algorithm. In the IC scenario, when more power is allocated to the
transmitter the secrecy rate will suffer from a dramatic decrease.
\end{myThe}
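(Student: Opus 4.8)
The plan is to treat the two assertions separately, reusing the machinery of Proposition~1 and Theorem~1. For the buffer claim, I would first observe that the RJFS secrecy rate in (\ref{equ:CDk19}) has exactly the additive structure of the plain relay secrecy rate used in Proposition~1, namely a desired source-to-relay term $\log_{2}\det(\boldsymbol I+\boldsymbol H_{i_{R}}^{(t)}\boldsymbol U\boldsymbol U^{H}{\boldsymbol H_{i_{R}}^{(t)}}^{H})$ minus an eavesdropper term $\log_{2}\det(\boldsymbol I+\boldsymbol\Gamma_{e}^{(t)})$. Since the eavesdropper term does not depend on the time slot at which a stored packet is released, the buffering argument of Proposition~1 applies verbatim: a buffer of size $L$ lets the desired term be evaluated at a favourable channel realization rather than at the instantaneous one, so the achievable secrecy rate of BF-RJFS equals that of RJFS plus a non-negative increment. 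I would then invoke (\ref{equ:CDk11}) to identify the maximal increment as $\varDelta_{\rm{R-BF}}$ and conclude that the improvement lies in $[0,\varDelta_{\rm{R-BF}}]$, which is precisely the transfer of Proposition~1 to the RJFS setting.

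For the power claim I would make the power split explicit, writing the transmitter power as $P$ and the relay/jamming power as the complementary budget $2-P$, and substituting this parameterization into (\ref{eqn:RE2}) and (\ref{eqn:RE21}) so that the jamming covariance $\boldsymbol\varDelta$ scales with $2-P$ and vanishes as $P$ grows. In the IC scenario the relay's desired term in (\ref{equ:CDk19}) scales as $\log_{2}\det(\boldsymbol I+\tfrac{P}{N_{t}}\boldsymbol H_{i_{R}}^{(t)}\boldsymbol U\boldsymbol U^{H}{\boldsymbol H_{i_{R}}^{(t)}}^{H})$ and is immune to jamming, whereas the eavesdropper term scales as $\log_{2}\det(\boldsymbol I+(\boldsymbol I+\boldsymbol\varDelta)^{-1}\tfrac{P}{N_{t}}\boldsymbol H_{e}\boldsymbol U\boldsymbol U^{H}\boldsymbol H_{e}^{H})$. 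The key observation is that both determinant terms grow with $P$ through the common factor $\tfrac{P}{N_{t}}\boldsymbol U\boldsymbol U^{H}$, so in the high-SNR regime their leading growth in $P$ cancels in the difference; what then controls the secrecy rate is the protective jamming term $\boldsymbol\varDelta$, which appears only in the eavesdropper's expression. Because $\boldsymbol\varDelta$ decreases monotonically to $\boldsymbol 0$ as $P\to 2$, the eavesdropper term rises on two counts (larger signal, weaker jamming) while the relay term gains are offset, so the net secrecy rate in (\ref{equ:CDk19}) falls toward its channel-asymmetry floor; differentiating with respect to $P$ I would establish $\partial C_{\rm{RJFS-IC}}/\partial P<0$ over this range, yielding the asserted sharp decrease.

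The main obstacle is the second part: making the cancellation of the two $\log\det$ terms rigorous, so that the secrecy rate is genuinely governed by the vanishing jamming $\boldsymbol\varDelta$ rather than by the competing growth of the desired and eavesdropper signals. This requires matching the effective number of spatial streams (the ranks of $\boldsymbol H_{i_{R}}\boldsymbol U$ and $\boldsymbol H_{e}\boldsymbol U$) so that the $P$-dependent leading terms truly offset; once this matching is in place, the monotonic collapse of the jamming gain $\log_{2}\det(\boldsymbol I+\boldsymbol\varDelta)$ forces the secrecy rate down to its floor, and the ``dramatic decrease'' follows, consistent with the simulation curves.
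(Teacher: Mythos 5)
Your overall mechanism matches the paper's intent, but your route through the power-allocation claim is genuinely different from the one the paper takes. The paper does not work with the eavesdropper's jamming covariance $\boldsymbol\varDelta$ at all: it introduces the split $\eta P$ for the transmitter versus $(2-\eta)P$ for the relays, restates the BF-RJFS selection as two simultaneous criteria (one for the receiving relay $i_R$, one for the forwarding relay $n$), decomposes the secrecy rate into the two per-hop quantities $C^{(1)}$ (source-to-relay) and $C^{(2)}$ (relay-to-destination), and argues qualitatively that shifting power toward the transmitter weakens the relays' jamming so that one hop gains while the other loses and the net rate drops, deferring the ``dramatic'' character of the decrease to the simulation figures. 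You instead work with the single end-to-end expression, push the power dependence into $\boldsymbol\varDelta$ via (\ref{eqn:RE21}), and argue that as the relay budget shrinks the jamming term collapses while the $P$-scaling of the desired and eavesdropper $\log\det$ terms cancels at high SNR. Your version isolates the physical cause more cleanly (the secrecy rate is propped up by $\log_2\det(\boldsymbol I+\boldsymbol\varDelta)$, which vanishes), and your buffer argument is actually more explicit than the paper's, which says nothing beyond citing Proposition~1.

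The gap you flag is real, and it is the step on which your stronger claims rest: the asserted cancellation of the leading $P$-growth of $\log_2\det(\boldsymbol I+\tfrac{P}{N_t}\boldsymbol H_{i_R}\boldsymbol U\boldsymbol U^H\boldsymbol H_{i_R}^H)$ against that of the eavesdropper term requires the effective ranks of $\boldsymbol H_{i_R}\boldsymbol U$ and $\boldsymbol H_e\boldsymbol U$ to coincide, and with zero-forcing precoding to $M$ users and $N_e N\geqslant N_r M$ eavesdropper antennas there is no a priori reason they do; if they differ, the difference of the two terms grows or shrinks like a multiple of $\log P$ and your derivative argument $\partial C_{\rm RJFS\text{-}IC}/\partial P<0$ does not follow from the vanishing of $\boldsymbol\varDelta$ alone. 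Be aware that the paper never closes this gap either --- its proof is qualitative and leans on Figs.~\ref{fig:pa1} and \ref{fig:pa2} --- so your sketch is not weaker than the original, but the differentiation step you promise is not actually deliverable without the rank-matching hypothesis, and you should either assume it explicitly or retreat to the paper's qualitative monotonicity claim. You also omit the paper's point that distinguishes the IC from the non-IC case (with IC the legitimate relay's rate is immune to the relay-side power, so only the eavesdropper benefits from reallocating power to the transmitter, which is why the decrease is ``dramatic'' specifically under IC); adding one sentence to that effect would align your argument with what the theorem actually asserts.
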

\begin{proof}
In the buffer-aided RJFS algorithm, the relay selection and jamming
selection can be implemented simultaneously with the following
selection criterion:
\begin{equation}
{\mathcal R}_{i_{R}}=\rm{arg} \max_{i_{R} \in \boldsymbol{\Psi}}
\det \left(
({\boldsymbol{I}+\boldsymbol{\Gamma}_{e}^{(t)}})^{-1}({\boldsymbol{I}+\boldsymbol{\Gamma}_{i_{R}}^{(t)}})\right)
\end{equation}
and
\begin{equation}
{\mathcal R}_{n}= \rm{arg} \max_{n \in \boldsymbol{\Psi}} \det
\left(
({\boldsymbol{I}+\boldsymbol{\Gamma}_{e}^{(t)}})^{-1}({\boldsymbol{I}+\boldsymbol{\Gamma}_{n}^{(t)}})\right),
\end{equation}
where in both transmissions we can achieve high secrecy rate
performance with separate selection from the source to the relays
and from the relays to the destination. Considering power
allocation, with the parameter $\eta$ indicating the power allocated
to the transmitter, we assume the power allocated to the transmitter
is $\eta P$ and the power allocated to the relays is $(2-\eta)P$.
When $\eta \rightarrow 0 $, less power will be allocated to the
transmitter according to:
\begin{align}
C_{\rm{BF-RJFS-IRI}}^{(1)} & = \log_{2}\big({{ \{\det{(\boldsymbol I+\boldsymbol{\Gamma}_{i_{R}}^{(t)})}\}}}\big) \nonumber \\
 & \qquad -\log_{2}\big({{ \{\det(\boldsymbol I+\boldsymbol{\Gamma}_{e}^{(t)})\}}}\big)
\label{equ:CDk20}
\end{align}
and
\begin{align}
C_{\rm{BF-RJFS-IRI}}^{(2)} & = \log_{2}\big({{ \{\det{(\boldsymbol I+\boldsymbol{\Gamma}_{n}^{(t)})}\}}}\big) \nonumber \\
 & \qquad -\log_{2}\big({{ \{\det(\boldsymbol
 I+\boldsymbol{\Gamma}_{e}^{(t)})\}}}\big),
\label{equ:CDk21}
\end{align}
where the secrecy rate $C_{\rm{BF-RJFS-IRI}}^{(1)}$ will have an
increase, while $C_{\rm{BF-RJFS-IRI}}^{(2)}$ will have a decrease.
As a result, the overall secrecy rate will decrease. More
specifically, with less power allocated in the relay or the jammer,
IC that acts as a jamming signal to the eavesdropper will have less
effect on the contribution to the secrecy rate. Then, the overall
secrecy rate will have a dramatic decrease.

If IC is performed, according to (76) and (77), we have
\begin{align}
C_{\rm{BF-RJFS-IC}}^{(1)} & = \log_{2}\big({{ \{\det{(\boldsymbol I+\boldsymbol H_{i_{R}}^{(t)} \boldsymbol U \boldsymbol U^H {\boldsymbol H_{i_{R}}^{(t)}}^{H})}\}}}\big) \nonumber \\
 & \qquad -\log_{2}\big({{ \{\det(\boldsymbol I+\boldsymbol{\Gamma}_{e}^{(t)})\}}}\big)
\label{equ:CDk22}
\end{align}
and
\begin{align}
C_{\rm{BF-RJFS-IC}}^{(2)} & = \log_{2}\big({{ \{\det{(\boldsymbol I+\boldsymbol{\Gamma}_{n}^{(t)})}\}}}\big) \nonumber \\
 & \qquad -\log_{2}\big({{ \{\det(\boldsymbol
 I+\boldsymbol{\Gamma}_{e}^{(t)})\}}}\big),
\label{equ:CDk23}
\end{align}
where more power is allocated to the transmitter and the secrecy
rates $C_{\rm{BF-RJFS-IC}}^{(1)}$ and $C_{\rm{BF-RJFS-IC}}^{(2)}$
are less affected than those in the scenario with IC. The results in
Figs. \ref{fig:pa1} and \ref{fig:pa2} indicate the change with
different power allocation.
\end{proof}

\subsection{Greedy Algorithm}

\begin{myThe}
With high $\rm{SNRs}$, the proposed greedy BF-RJFS algorithm can
achieve comparable secrecy rate performance with a dramatic
reduction in the computational cost.
\end{myThe}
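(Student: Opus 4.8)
The plan is to verify the two assertions of the statement separately: the dramatic reduction in computational cost and the comparable secrecy rate at high SNR. For the complexity, I would contrast the exhaustive BF-RJFS search, which evaluates the objective in (\ref{eqn:sinrct2}) over all $\binom{S_{\rm total}}{S}$ subsets collected in $\boldsymbol\Psi$, with the greedy procedure whose total number of objective evaluations is given by (\ref{eqn:greedycc}). Since $\varOmega_c=\sum_{j=0}^{S}(S_{\rm total}-j)$ grows only linearly in $S_{\rm total}$, whereas $\binom{S_{\rm total}}{S}$ grows combinatorially, the ratio of greedy to exhaustive evaluations tends to zero as $S_{\rm total}$ increases, which establishes the cost-reduction claim.

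For the performance claim, I would begin from the per-relay secrecy contribution used by the greedy rule in (\ref{eqn:ORJSsinrct}), with $\boldsymbol\Gamma_m^{(t)}$ and $\boldsymbol\Gamma_e^{(t)}$ as in (\ref{eqn:gammam}) and (\ref{eqn:gammam2}), and show that at high SNR the aggregate secrecy rate of any candidate subset decouples into a sum of these individual terms. The mechanism is that zero-forcing precoding removes the inter-user interference while IC removes the inter-relay interference, exactly the simplification already exploited in passing from (\ref{equ:CDk18}) to (\ref{equ:CDk19}); consequently the interference covariance $\boldsymbol R_I^{BF}$ in (\ref{eqn:sinrct2}) collapses toward the noise floor and the cross-terms between distinct selected relays vanish. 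The objective then becomes, up to a residual that is asymptotically negligible relative to the dominant $\log\det$ growth, a modular set function whose maximizer over subsets of cardinality $S$ is obtained simply by retaining the $S$ relays with the largest individual contributions.

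I would then observe that this is precisely what the greedy algorithm returns: each iteration of (\ref{eqn:ORJSsinrct}) selects the remaining relay with the largest per-relay secrecy term, and since these terms become independent of the previously chosen relays, no earlier choice can degrade a later one. Hence the greedy set coincides with the exhaustive-search optimizer in the high-SNR limit and the two secrecy rates converge, which yields the comparable-performance claim. Even away from exact decoupling, the user-side $\log\det$ term is a monotone submodular function of the selected relay set, so a Nemhauser--Wolsey--Fisher-type guarantee places the greedy solution within a constant factor of the optimum, with the gap closing as the interference residual and the eavesdropper jamming term $\boldsymbol\varDelta$ in (\ref{eqn:RE21}) become negligible.

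The hard part will be justifying the decoupling rigorously: one must control the residual coupling introduced by the finitely cancelled interference in $\boldsymbol R_I^{BF}$ and by the eavesdropper term, and show that these residuals are $o(1)$ relative to the per-relay $\log\det$ terms as $\rm{SNR}\rightarrow\infty$. Establishing that the subset ordering induced by the true objective agrees with that induced by the separable high-SNR approximation---so that greedy and exhaustive search return the same set---is where the main effort lies.
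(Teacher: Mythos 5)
Your counting argument for the cost reduction is exactly what the paper does: it compares $\varOmega_{\rm exhaustive}=\binom{S_{\rm total}}{S}$ from (\ref{eqn:exh}) with the arithmetic sum $\varOmega_{\rm greedy}=S_{\rm total}S-S(S-1)/2$ from (\ref{eqn:gre}) and concludes $\varOmega_{\rm exhaustive}\gg\varOmega_{\rm greedy}$. On that half of the theorem you and the paper coincide.

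Where you diverge is the ``comparable secrecy rate at high SNR'' half. The paper's proof simply does not address it analytically: it stops at the complexity comparison and defers the performance claim to the simulations in Fig.~\ref{fig:greedy}. You instead sketch a high-SNR decoupling argument (ZF precoding plus IC collapse $\boldsymbol R_I^{BF}$ so the objective becomes modular and greedy equals exhaustive) backed up by a submodularity fallback. This is a genuinely different and more ambitious route, and if it could be completed it would strengthen the theorem considerably. But as you yourself flag, the key steps are not established: the secrecy objective in (\ref{eqn:Omega3}) and (\ref{eqn:sinrct2}) is a \emph{difference} of $\log\det$ terms, and while $\log\det$ of a sum of positive semidefinite contributions is monotone submodular, a difference of two such functions is in general neither monotone nor submodular, so the Nemhauser--Wolsey--Fisher guarantee does not apply to the actual objective being maximized. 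Likewise, the claim that the residual coupling through $\boldsymbol R_I^{BF}$ and the eavesdropper term is $o(1)$ relative to the per-relay $\log\det$ growth is asserted rather than proved, and the subset-ordering agreement between the true and the separable objectives is precisely the gap you identify. In short: your proposal reproduces the paper's entire proof and then opens a second front that the paper never fights; that second front is incomplete, but the paper offers nothing in its place either.
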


\begin{proof}
According to (\ref{eqn:Omega}), the total number of visited sets for
an exhaustive search can be expressed as:
\begin{equation}
\varOmega_{\rm exhaustive}=\dfrac{{\color{black}S_{\rm
total}}!}{({\color{black}S_{\rm
total}}-{\color{black}S})!{\color{black}S}!}. \label{eqn:exh}
\end{equation}
In the proposed greedy algorithms, the search is implemented in the
remaining relay nodes so that the total number of visited sets in
the greedy search is given by
\begin{equation}
\begin{split}
\varOmega_{\rm greedy}& ={\color{black}S_{\rm
total}}+{\color{black}S_{\rm total}}-1 +\cdots+{\color{black}S_{\rm
total}}-{\color{black}S}+1 \\ & ={\color{black}S_{\rm
total}}{\color{black}S}-\dfrac{{\color{black}S}({\color{black}S}-1)}{2}
\label{eqn:gre}
\end{split}
\end{equation}
Based on (\ref{eqn:exh}) and (\ref{eqn:gre}), for a number of
selected relay nodes {\color{black}$S$}, when the total number of
relay nodes $S_{\rm out}$ increases, the total number of visited
sets for the exhaustive search is much higher than those for the
greedy search, that is $\varOmega_{\rm exhaustive}>> \varOmega_{\rm
greedy}$.
\end{proof}

\section{Simulation Results}

In this section, we assess the secrecy-rate performance of the
proposed E-SR relay selection criterion and the RJFS and BF-RJFS
algorithms against existing techniques via simulations for the
downlink of a multiuser buffer-aided relay systems. In particular,
the proposed E-SR relay selection criterion is compared against the
impractical SR method that uses the CSI of the eavesdroppers, the
SINR-based techniques and the max-ratio approach. Moreover, the
proposed RJFS and BF-RJFS algorithms that employ IC are evaluated
against an approach without IC. We consider both single-antenna and
MIMO settings. In a single-antenna scenario, the transmitter is
equipped with $3$ antennas to broadcast the signal to $3$ legitimate
users through multiple single-antenna relays in the presence of $3$
eavesdroppers equipped with a single antenna. In the MIMO scenario,
the transmitter is equipped with $6$ antennas and each user,
eavesdropper and relay has $2$ antennas. The buffer can store up to
$J = 4$ packets. In both scenarios, a zero-forcing precoding
technique is employed at the transmitter, $\eta_{L_{\rm max}}=5$ and
we assume that the CSI for each user is available.

In the first example, we consider a scenario with uncorrelated
channels, whereas in the second example we include correlated
channels whose channel matrix is expressed by
\begin{equation}
\boldsymbol H_c=\boldsymbol {R_r}^{\frac{1}{2}}\boldsymbol H\boldsymbol {R_t}^{\frac{1}{2}}
\label{eqn:Hc}
\end{equation}
where $\boldsymbol {R_r}$ and $\boldsymbol {R_t}$ are receive and
transmit covariance matrices with ${\rm{Tr}}(\boldsymbol {R_r}) =
Nr$ and ${\rm{Tr}}(\boldsymbol {R_t}) = Nt$. Both $\boldsymbol
{R_r}$ and $\boldsymbol {R_t}$ are positive semi-definite Hermitian
matrices. For the case of an urban wireless environment, the user
is always surrounded by rich scattering objects and the channel is
most likely independent Rayleigh fading at the receive side. Hence,
we assume $\boldsymbol {R_r} = \boldsymbol {I}_{N_r}$, and we have
\begin{equation}
\boldsymbol H_c=\boldsymbol H\boldsymbol {R_t}^{\frac{1}{2}}
\label{eqn:Hc1}
\end{equation}
To study the effect of antenna correlations, random realizations of
correlated channels are generated based on the exponential
correlation model such that the elements of $\boldsymbol {R_t}$ are
given by
\begin{equation}
\boldsymbol {R_t}(i,j)=\begin{cases}
    r^{j-i}       & \quad \text{if } i\leq j\\
    r_{j,i}^{*}  & \quad \text{if } i>j\\
  \end{cases}, |r|\leq 1
\label{eqn:Hcr}
\end{equation}
where $r$ is the correlation coefficient between any two neighboring
antennas.

\begin{figure}[ht]
\centering
\includegraphics[width=0.85\linewidth]{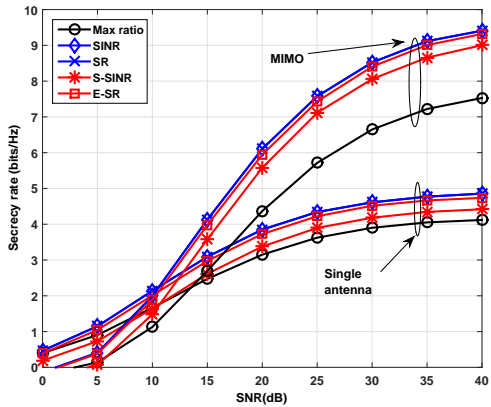}
\caption{Secrecy-rate performance of relay selection criteria in
uncorrelated channels.} \label{fig:thr}
\end{figure}

\begin{figure}[ht]
\centering
\includegraphics[width=0.85\linewidth]{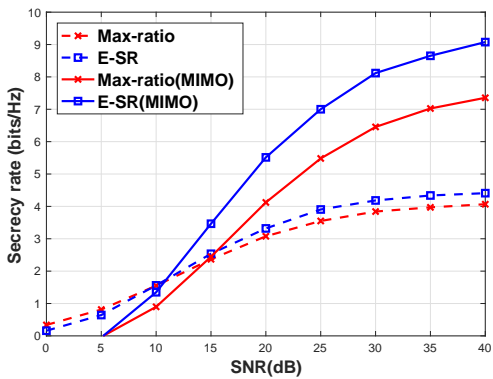}
\caption{Secrecy-rate performance of relay selection criteria in
correlated channels.} \label{fig:cor}
\end{figure}

In Figs. \ref{fig:thr} and \ref{fig:cor} we compare the secrecy rate
performance in uncorrelated and correlated channels. The results
indicate that the proposed E-SR relay selection criterion can
improve the secrecy rate in both scenarios. Among the investigated
relay selection criteria, E-SR is close to the SR-based scheme that
employs CSI to the eavesdroppers and outperforms SINR-based
techniques, which are often adopted in the literature
\cite{Dong,Shafie1} and require the CSI of the eavesdroppers.

\begin{figure}[ht]
\centering
\includegraphics[width=0.85\linewidth]{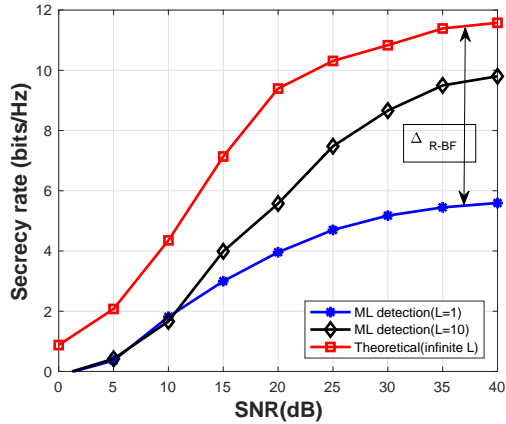}
\caption{Secrecy-rate performance versus buffer size in uncorrelated
channels.} \label{fig:theo}
\end{figure}

In Fig. \ref{fig:theo}, the secrecy rate performance with infinite
buffer size is compared with buffer size $L=1$ and $L=10$. The
theoretical curves are obtained with the expression obtained in
Section V for the secrecy rate difference $\Delta_{\rm R-BF}$.
According to the results, when the buffer size is increased, the
secrecy rate will improve and get close to the theoretical curves.

\begin{figure}[ht]
\centering
\includegraphics[width=0.85\linewidth]{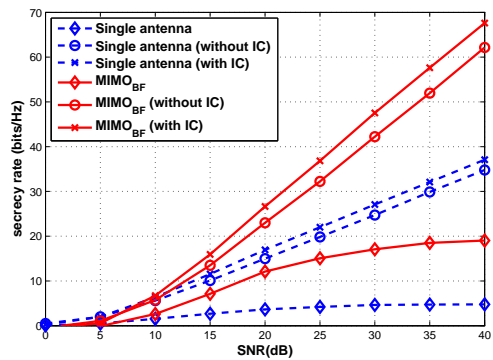}
\caption{Secrecy rate performance in correlated channels.}
\label{fig:IRI1}
\end{figure}

In Fig. \ref{fig:IRI1}, in a single-antenna scenario, the
secrecy-rate performance with the proposed IC scheme and RJFS
algorithm is better than that with the conventional algorithm
without IC. With IC, the secrecy-rate performance is better than the
one without IC, as expected. Compared with the single-antenna
scenario, the multiuser MIMO system contributes to the improvement
in the secrecy rate as verified in Fig. \ref{fig:IRI1}.

\begin{figure}[ht]
\centering
\includegraphics[width=0.85\linewidth]{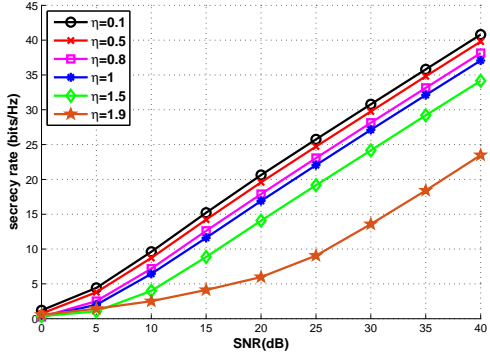}
\caption{Secrecy rate performance with power allocation and IC.}
\label{fig:pa1}
\end{figure}

\begin{figure}[ht]
\centering
\includegraphics[width=0.85\linewidth]{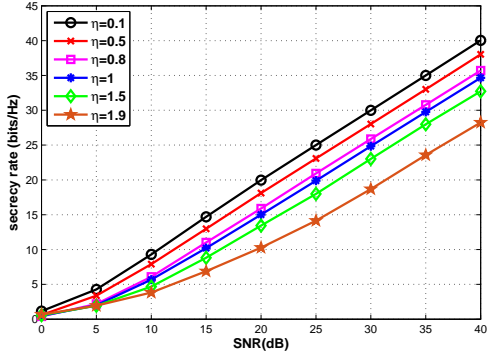}
\caption{Secrecy rate performance with power allocation and without
IC.} \label{fig:pa2}
\end{figure}

In Fig. \ref{fig:pa1} and Fig. \ref{fig:pa2}, a power allocation
technique is considered and the parameter $\eta$ indicates the power
allocated to the transmitter. If we assume in the equal power
scenario that the power allocated to the transmitter as well as the
relays are both $P$, then the power allocated to the transmitter is
$\eta P$ and the power allocated to the relays is $(2-\eta)P$. In
Fig. \ref{fig:pa1} and Fig. \ref{fig:pa2} we can notice that with
more power allocated to the transmitter the secrecy rate performance
will become worse. Comparing Fig. \ref{fig:pa1} and Fig.
\ref{fig:pa2}, when $\eta<1.5$ the secrecy rate performance in the
scenario with IC is better than that without IC. When $\eta>1.5$ the
secrecy rate of the system without IC is better than that of the
system with IC.

\begin{figure}[ht]
\centering
\includegraphics[width=0.85\linewidth]{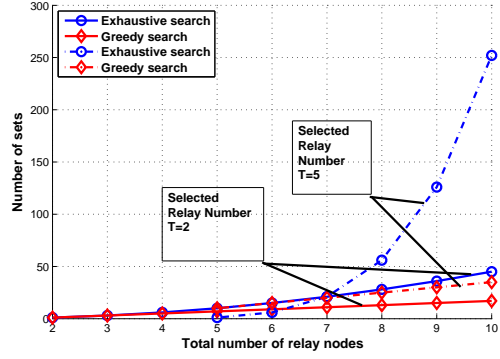}
\caption{Number of visited sets for the exhaustive and greedy
searches.} \label{fig:complexity}
\end{figure}

\begin{figure}[ht]
\centering
\includegraphics[width=0.85\linewidth]{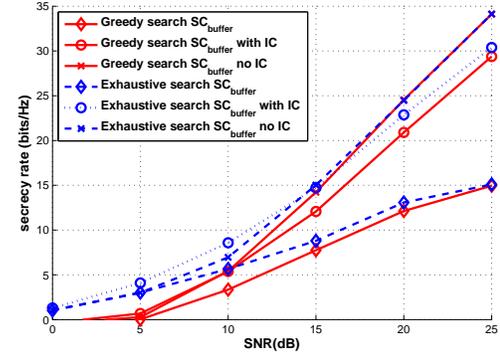}
\caption{Secrecy rate performance with an exhaustive search and the
proposed greedy algorithms.} \label{fig:greedy}
\end{figure}

In Fig. \ref{fig:complexity} with a fixed number of relays, the
computational complexity of the exhaustive and the greedy searches
with the RJFS and BF-RJFS algorithms is examined. The results show
that the greedy algorithms are substantially simpler than those with
the exhaustive search and are suitable for scenarios with a higher
number of relays. In Fig. \ref{fig:greedy} a comparison between the
exhaustive search and the greedy algorithms is carried out. The
results show that the greedy algorithms approach the same secrecy
rate with a much lower complexity than that of the exhaustive
search-based techniques.

\section{Conclusion}

In this work, we have proposed the E-SR approach that allows the
maximization of the secrecy rate in buffer-aided relay systems
without the need for the CSI of the eavesdroppers. We have also
presented algorithms to select a set of relay nodes to enhance the
legitimate users' transmission and another set of relay nodes to
perform jamming of the eavesdroppers. The proposed RJFS and BF-RJFS
selection algorithms can exploit the use of the buffers in the relay
nodes and result in substantial gains in secrecy rate over existing
techniques.

% if have a single appendix:
\appendix[Proof of the proposed E-SR criterion]
In this appendix, we include the detailed steps of the derivation of
the E-SR criterion.
\begin{proof}
From the original expressions for the achievable rate for users and
eavesdroppers, which are shown in (\ref{eqn:RR1}) and
(\ref{eqn:RE1}), we consider relay selection based on the secrecy
rate criterion according to: \begin{equation} {\mathcal R}^{\rm SR}
=\arg \max_{\boldsymbol \varphi \in \boldsymbol{\Psi}}\sum_{\phi_i
\in \boldsymbol \varphi} \left\{ \frac{\det(\boldsymbol I
+{\boldsymbol{\Gamma}_{r}^{(t)}})}{\det(\boldsymbol
I+{\boldsymbol{\Gamma}_{e}^{(t)}})}\right\}, \label{eqn:alsinrct4}
\end{equation}
where ${\boldsymbol{\Gamma}}_{r}^{(t)}=({\boldsymbol
H}_{i}{\boldsymbol R}_{I} {\boldsymbol H}_{i}^H)^{-1}({\boldsymbol
H}_{i}{\boldsymbol R}_d {\boldsymbol H}_{i}^H)$ and
${\boldsymbol{\Gamma}}_{e}^{(t)}=({\boldsymbol H}_{e}{\boldsymbol
R}_{I} {\boldsymbol H}_{e}^H)^{-1}({\boldsymbol H}_{e}{\boldsymbol
R}_r {\boldsymbol H}_{e}^H)$. Note that (\ref{eqn:alsinrct4})
requires ${\boldsymbol H}_{e}$, i.e., CSI to the eavesdroppers and
that channels from all relays are taken into account for selection.
In what follows, we show that a designer can employ an equivalent
expression to (\ref{eqn:alsinrct4}) without resorting to the
knowledge of CSI to the eavesdroppers. This requires the assumption
that several channel matrices are square. However, it can also be
used even for scenarios of non-square channel matrices if the
matrices are
completed with zeros to ensure a square structure. %We also assume
%that the channel matrices of the users have the same size as the
%channel matrices of the eavesdroppers and that all matrices are full
%rank.

In (\ref{eqn:alsinrct4}), our aim is to circumvent the need for CSI
to the eavesdroppers from the denominator. To this end, we assume
square matrices which allows the linear algebra property
$\det(\boldsymbol A\boldsymbol B)=\det(\boldsymbol
A)\det(\boldsymbol B)$. Following this approach, the denominator of
(\ref{eqn:alsinrct4}) can be expressed as
\begin{equation}
{\det[{\boldsymbol \Lambda_1}^{-1}{\boldsymbol \Lambda_1}+{\boldsymbol \Lambda_1}^{-1}(\boldsymbol H_{e}\boldsymbol R_{d}\boldsymbol
H_{e}^{H})]}, \label{eqn:alsinrct411}
\end{equation}
where ${\boldsymbol \Lambda_1}=\boldsymbol H_{e}\boldsymbol
R_{I}\boldsymbol H_{e}^{H}$.

Since $\boldsymbol \Lambda_1$ is assumed to be a square matrix,
(\ref{eqn:alsinrct411}) can be decomposed as
\begin{equation}
\det[{\boldsymbol \Lambda_1}^{-1}]{\det[{\boldsymbol \Lambda_1}+(\boldsymbol H_{e}\boldsymbol R_{d}\boldsymbol
H_{e}^{H})]}, \label{eqn:alsinrct412}
\end{equation}
Using the property of the determinant $\det(\boldsymbol
A^{-1})=\frac{1}{\det(\boldsymbol A)}$ \cite{matrix}, we have
\begin{equation}
(\det[{\boldsymbol \Lambda_1}])^{-1}{\det[{\boldsymbol
\Lambda_1}+(\boldsymbol H_{e}\boldsymbol R_{d}\boldsymbol
H_{e}^{H})]}, \label{eqn:alsinrct42}
\end{equation}
where {\small \begin{equation}
\begin{split}
\hspace{-1em}(\det[{\boldsymbol \Lambda_1}])^{-1}& =
\big(\det[\boldsymbol
H_{e}\boldsymbol R_{I}\boldsymbol H_{e}^{H}]\big)^{-1}\\
& = \big(\det[\boldsymbol H_{e}(\sum_{j\neq i}\boldsymbol
U_{j}{\boldsymbol s}_{j}^{(t)}{{\boldsymbol
s}_{j}^{(t)}}^{H}{\boldsymbol U_{j}}^{H})\boldsymbol
H_{e}^{H}]\big)^{-1}\\
& = \big(\det[\boldsymbol H_{e}\underbrace{{\boldsymbol U}_i
{\boldsymbol U}_i^{-1}}_{\boldsymbol I}(\sum_{j\neq i}\boldsymbol
U_{j}{\boldsymbol s}_{j}^{(t)}{{\boldsymbol
s}_{j}^{(t)}}^{H}{\boldsymbol U_{j}}^{H})\underbrace{{\boldsymbol
U}_i^{H^{-1}} {\boldsymbol U}_i^{H}}_{\boldsymbol I}{\boldsymbol
H}_{e}^{H}]\big)^{-1}\\
& = \big(\det[\boldsymbol H_{e}{\boldsymbol
U}_i]\det\big[(\sum_{j\neq i}{\boldsymbol U}_i^{-1}{\boldsymbol
U}_{j}{\boldsymbol s}_{j}^{(t)}{{\boldsymbol
s}_{j}^{(t)}}^{H}{\boldsymbol U_{j}}^{H}{\boldsymbol
U}_i^{H^{-1}})\big] \\
& \det[{\boldsymbol U}_i^{H}{\boldsymbol H}_{e}^{H}]\big)^{-1}
\end{split}
\end{equation}}
and
\begin{equation}
\begin{split}
\det[\underbrace{{\boldsymbol \Lambda_1}+(\boldsymbol
H_{e}\boldsymbol R_{d}{\boldsymbol H}_{e}^{H})}_{\boldsymbol
H_{e}({\boldsymbol R}_{I}+{\boldsymbol R}_{d}){\boldsymbol
H}_{e}^{H}}] & = \det[{\boldsymbol H}_{e}{\boldsymbol U}_i ] \times \\
& \det\big[(\sum_{j\neq i}{\boldsymbol U}_i^{-1}{\boldsymbol
U}_{j}{\boldsymbol s}_{j}^{(t)}{{\boldsymbol s}_{j}^{(t)}}^{H}
{\boldsymbol U_{j}}^{H}{\boldsymbol U}_i^{H^{-1}})
\\ & +{\boldsymbol s}_{i}^{(t)}{{\boldsymbol s}_{i}^{(t)}}^{H}
\big]\det[{\boldsymbol U}_i^{H}{\boldsymbol H}_{e}^{H}]
\end{split}
\end{equation}
The denominator of the argument in (\ref{eqn:alsinrct4}) can then be
expressed as
\begin{equation}
\begin{split}
&(\det[{\boldsymbol \Lambda_1}])^{-1}{\det[{\boldsymbol
\Lambda_1}+(\boldsymbol H_{e}\boldsymbol R_{d}\boldsymbol
H_{e}^{H})]}  =  \\ & \hspace{-2.5em} \big(\det[\boldsymbol
H_{e}{\boldsymbol U}_i]\det\big[(\sum_{j\neq i}{\boldsymbol
U}_i^{-1}{\boldsymbol U}_{j}{\boldsymbol s}_{j}^{(t)}{{\boldsymbol
s}_{j}^{(t)}}^{H}{\boldsymbol U_{j}}^{H}{\boldsymbol
U}_i^{H^{-1}})\big]\det[{\boldsymbol U}_i^{H}{\boldsymbol
H}_{e}^{H}]\big)^{-1} \\ &  \hspace{-1em}\quad \det[{\boldsymbol
H}_{e}{\boldsymbol U}_i ] \det\big[(\sum_{j\neq i}{\boldsymbol
U}_i^{-1}{\boldsymbol U}_{j}{\boldsymbol s}_{j}^{(t)}{{\boldsymbol
s}_{j}^{(t)}}^{H}{\boldsymbol U_{j}}^{H}{\boldsymbol
U}_i^{H^{-1}})+{\boldsymbol s}_{i}^{(t)}{{\boldsymbol
s}_{i}^{(t)}}^{H}  \big]  \\ &   \det[{\boldsymbol
U}_i^{H}{\boldsymbol H}_{e}^{H}] , \label{eqn:alsinrct43}
\end{split}
\end{equation}
Using the matrix inverse property $\big(\det[{\boldsymbol A}]
\det[{\boldsymbol B}] \det[{\boldsymbol C}]\big)^{-1} =
(\det[{\boldsymbol C}^H])^{-1} (\det[{\boldsymbol B}])^{-1}
(\det[{\boldsymbol A}^H])^{-1}$ \cite{matrix}, we can write
\begin{equation}
\begin{split}
\hspace{-1em}&(\det[{\boldsymbol \Lambda_1}])^{-1}{\det[{\boldsymbol
\Lambda_1}+(\boldsymbol H_{e}\boldsymbol R_{d}\boldsymbol
H_{e}^{H})]}  =  \big(\det[{\boldsymbol H}_{e}{\boldsymbol
U}_i]\big)^{-1} \\ & \big(\det\big[(\sum_{j\neq i}{\boldsymbol
U}_i^{-1}{\boldsymbol U}_{j}{\boldsymbol s}_{j}^{(t)}{{\boldsymbol
s}_{j}^{(t)}}^{H}{\boldsymbol U_{j}}^{H}{\boldsymbol
U}_i^{H^{-1}})\big]\big)^{-1} \\  &\hspace{-2em}
\big(\det[{\boldsymbol U}_i^{H}\boldsymbol H_{e}^{H}]\big)^{-1}
\det[{\boldsymbol H}_{e}{\boldsymbol U}_i ] \det\big[(\sum_{j\neq
i}{\boldsymbol U}_i^{-1}{\boldsymbol U}_{j}{\boldsymbol
s}_{j}^{(t)}{{\boldsymbol s}_{j}^{(t)}}^{H}{\boldsymbol
U_{j}}^{H}{\boldsymbol U}_i^{H^{-1}})\\ & \quad +{\boldsymbol
s}_{i}^{(t)}{{\boldsymbol s}_{i}^{(t)}}^{H}  \big]\det[{\boldsymbol
U}_i^{H}{\boldsymbol H}_{e}^{H}] , \label{eqn:alsinrct43}
\end{split}
\end{equation}
By observing the terms above, we notice that that the $1$st term can
be canceled by the $4$th term, and that the $3$rd term can be
canceled by the $6$th term, resulting in
\begin{equation}
\begin{split}
&(\det[{\boldsymbol \Lambda_1}])^{-1}{\det[{\boldsymbol
\Lambda_1}+(\boldsymbol H_{e}\boldsymbol R_{d}\boldsymbol
H_{e}^{H})]}  =  \\ & \big(\det\big[(\sum_{j\neq i}{\boldsymbol
U}_i^{-1}{\boldsymbol U}_{j}{\boldsymbol s}_{j}^{(t)}{{\boldsymbol
s}_{j}^{(t)}}^{H}{\boldsymbol U_{j}}^{H}{\boldsymbol
U}_i^{H^{-1}})\big]\big)^{-1} \\ & \quad \det\big[(\sum_{j\neq
i}{\boldsymbol U}_i^{-1}{\boldsymbol U}_{j}{\boldsymbol
s}_{j}^{(t)}{{\boldsymbol s}_{j}^{(t)}}^{H}{\boldsymbol
U_{j}}^{H}{\boldsymbol U}_i^{H^{-1}})+{\boldsymbol
s}_{i}^{(t)}{{\boldsymbol s}_{i}^{(t)}}^{H}  \big]\\
& = \det[ {\boldsymbol I} + {\boldsymbol U}_i^H {\boldsymbol
R}_{I}^{-1} {\boldsymbol U}_{i}{\boldsymbol R}_d],
\label{eqn:alsinrct44}
\end{split}
\end{equation}
By substituting the result in (\ref{eqn:alsinrct44}) in
(\ref{eqn:alsinrct4}), we obtain the proposed E-SR selection
criterion given by {\small
\begin{equation}
\begin{split}
\hspace{-1em}{\mathcal R}^{\rm S-SR}& =\arg \max_{\boldsymbol
\varphi \in \boldsymbol{\Psi}}\sum_{\phi_i \in \boldsymbol \varphi}
\left\{\log\Big( \frac{\det[\boldsymbol I
+{\boldsymbol{\Gamma}_{r}^{(t)}}]}{\det(\boldsymbol
I+{\boldsymbol{\Gamma}_{e}^{(t)}}]}\Big)\right\}\\
& =\arg \max_{\boldsymbol \varphi \in \boldsymbol{\Psi}}\sum_{\phi_i
\in \boldsymbol \varphi} \left\{\log\Big( \frac{\det[\boldsymbol I
+(\boldsymbol H_{i}\boldsymbol R_{I}\boldsymbol
H_{i}^{H})^{-1}(\boldsymbol H_{i}\boldsymbol R_{d}\boldsymbol
H_{i}^{H})]}{\det[{\boldsymbol I}+({\boldsymbol H}_{e}{\boldsymbol
R}_{I} {\boldsymbol H}_{e}^H)^{-1}({\boldsymbol H}_{e}{\boldsymbol
R}_r {\boldsymbol H}_{e}^H)]}\Big)\right\}\\
& =\rm{arg} \max_{\boldsymbol \varphi \in \boldsymbol{\Psi}}
\sum_{\phi_i \in \boldsymbol \varphi}\left\{\log\Big(
\frac{\det[\boldsymbol I+(\boldsymbol H_{i}\boldsymbol
R_{I}\boldsymbol H_{i}^{H})^{-1}(\boldsymbol H_{i}\boldsymbol
R_{d}\boldsymbol H_{i}^{H})]}{\det[\boldsymbol I + \boldsymbol
U_{i}^{H}\boldsymbol R_{I}^{-1}\boldsymbol U_{i}\boldsymbol
R_{d}]}\Big)\right\}\\
& = \arg \max_{\boldsymbol \varphi \in \boldsymbol{\Psi}}\sum_{i \in
\boldsymbol \varphi} \bigg\{\log\big(\det{\left[\boldsymbol
I+(\boldsymbol H_{i}\boldsymbol R_{I}\boldsymbol
H_{i}^{H})^{-1}(\boldsymbol
H_{i}\boldsymbol R_{d}\boldsymbol H_{i}^{H})\right]}\big) \\
& \qquad \log\big(\det{\left[\boldsymbol I + \boldsymbol
U_{i}^{H}\boldsymbol R_{I}^{-1}\boldsymbol U_{i}\boldsymbol
R_{d}\right]}\big)\bigg\}, \label{eqn:alsinrct451}
\end{split}
\end{equation} }
where the last expression in (\ref{eqn:alsinrct451}) no longer
requires knowledge of CSI to the eavesdroppers ${\boldsymbol H}_{e}$
and is equivalent to (\ref{eqn:alrnew}).

\end{proof}

% Can use something like this to put references on a page
% by themselves when using endfloat and the captionsoff option.
\ifCLASSOPTIONcaptionsoff
  \newpage
\fi

\bibliographystyle{IEEEtran}
\bibliography{referenceIEEE}

% that's all folks
\end{document}